\newcommand{\IfShort}[1]{\nottoggle{longversion}{#1}{}}
\newcommand{\IfLong}[1]{\iftoggle{longversion}{#1}{}}
\newcommand{\TRR}{\mbox{\sf TSRR}}
\newcommand{\TWO}{\mbox{\sf TSTP}}
\newcommand{\TSL}{\mbox{\sf TSMP}}
\newcommand{\SLRthree}{$\mbox{\sf SLR}_3$}
\newcommand{\defiff}{\ensuremath\:\mathbin{:\!\!\iff\!\!}\:}
\DeclareMathOperator{\widen} {\nabla}
\DeclareMathOperator{\narrow}{\Delta}
\DeclareMathOperator{\upd}{\boxslash}
\newcommand{\angl}[1]{{\langle#1\rangle}}
\newcommand{\sem}[1]{\llbracket#1\rrbracket}
\newcommand{\semSh}[1]{\llbracket#1\rrbracket^\sharp}
\newcommand{\R}{\mathrel{\mathcal{R}}}
\newcommand{\N}{\mathbb{N}}
\newcommand{\C}{\mathbb{C}}
\newcommand{\D}{\mathbb{D}}
\renewcommand{\gets}{\coloneqq}
\DeclareMathAlphabet{\kw}{\encodingdefault}{\sfdefault}{bx}{n}
\pgfplotsset{compat=1.5}
\begin{document}
\title{Enforcing Termination of Interprocedural Analysis}
\author{Stefan Schulze Frielinghaus \and Helmut Seidl \and Ralf Vogler}
\institute{	Fakult\"at f\"ur Informatik,
	TU M\"unchen, Germany,\\
	$\{${\tt schulzef,seidl,voglerr}$\}${\tt @in.tum.de}
	}
\maketitle

\begin{abstract}
Interprocedural analysis by means of partial tabulation of summary functions
may not terminate when the same procedure is analyzed for infinitely many
abstract calling contexts or when the abstract domain has infinite strictly ascending chains.
As a remedy, we present a novel local solver for general abstract equation systems, be they 
monotonic or not, and prove that this solver fails to terminate only when infinitely many
variables are encountered.
We clarify in which sense the computed results are sound.
Moreover, we show that interprocedural analysis performed by this novel local solver,
is guaranteed to terminate for all non-recursive programs --- irrespective of whether the 
complete lattice is infinite or has infinite strictly ascending or descending chains.

\end{abstract}

\section{Introduction}\label{s:intro}

It is well known that static analysis of run-time properties of programs
by means of abstract interpretation can be compiled into systems of equations over complete lattices~\cite{Cousot77-1}.
Thereby, various interesting properties require complete
lattices which may have infinite strictly ascending or descending chains~\cite{Gonnord06,Chen10,Bagnara05}.
In order to determine a (post-) solution of a system of equations over such lattices,
Cousot and Cousot propose to perform a first phase of iteration using a \emph{widening} operator to obtain a
post-solution which later may be improved by a second phase of iteration using a \emph{narrowing}
operator.
This strict arrangement into separate phases, though, has the disadvantage that precision unnecessarily
may be given up which later is difficult to recover.
It has been observed that widening and narrowing need not be organized into separate phases~\cite{Apinis13,Apinis16,Amato16}.
Instead various algorithms are proposed which
\emph{intertwine} widening with narrowing in order to compute a (reasonably small) post-fixpoint of the given
system of equations.
The idea there is to combine widening with narrowing into a single operator and then to iterate according to
some fixed ordering over the variables of the system. Still, monotonicity of all right-hand sides is required
for the resulting algorithms to be terminating~\cite{Apinis13,Amato16}.

Non-monotonic right-hand sides, however, are introduced by interprocedural analysis in the style of~\cite{Apinis12}
when partial tabulation of summary functions 
is used.
In order to see this, consider an abstract lattice $\D$ of possible program invariants.
Then the abstract effect of a procedure call can be formalized as a transformation $f^\sharp$ from $\D\to\D$.
For rich lattices $\D$ such transformations may be difficult to represent and compute with.
As a remedy, each single variable function may be decomposed into a set of variables --- one
for each possible argument --- where each such variable now receives values from $\D$ only.
As a result, the difficulty of dealing with elements of $\D\to\D$ is replaced with the difficulty of
dealing with systems
of equations which are infinite when $\D$ is infinite.
Moreover, composition of abstract functions is translated into \emph{indirect addressing} of variables (the outcome of the
analysis for one function call determines for which argument another function is queried) --- implying non-monotonicity~\cite{Fecht96}.
Thus, termination of interprocedural analysis by means of the solvers from~\cite{Apinis13,Amato16} cannot be
guaranteed.
Interestingly, the \emph{local} solver \SLRthree~\cite{Amato16} terminates in many practical cases.
Nontermination, though, may arise in two flavors:
\begin{itemize}
\item	infinitely many variables may be encountered, i.e., some procedure may be analyzed for an ever growing
	number of calling contexts;
\item	the algorithm may for some variable switch infinitely often from a narrowing iteration back to a widening
	iteration.
\end{itemize}
From a conceptual view, the situation still is unsatisfactory: any solver used as a fixpoint engine within
a static analysis tool should reliably terminate under reasonable assumptions.
In this paper, we therefore re-examine interprocedural analysis by means of local solvers.
First, we extend an ordinary local solver to a two-phase solver which performs widening and subsequently
narrowing. The novel point is that both iterations are performed in a demand-driven way so that also during the
narrowing phase fresh variables may be encountered for which no sound over-approximation has yet been computed.

In order to enhance precision of this demand-driven two-phase solver,
we then design a new local solver which intertwines the two phases.
In contrast to the solvers in~\cite{Apinis13,Amato16}, however, we can no longer rely on a fixed combination
of a widening and a narrowing operator, but must enhance the solver with extra logic to decide when to apply which operator.
For both solvers, we prove that
they terminate --- whenever only finitely many variables are encountered:
irrespective whether the abstract system is monotonic or not.
Both solvers are guaranteed to return (partial) post-solutions of the abstract system of equations only
if all right-hand sides are monotonic.
Therefore, we make clear in which sense the computed results are nonetheless sound --- even in the non-monotonic case.
For that, we provide a sufficient condition for an abstract variable assignment to be a sound description
of a concrete system --- given only a (possibly non-monotonic) abstract system of equations.
This sufficient condition is formulated by means of the
\emph{lower monotonization} of the abstract system. Also, we elaborate for partial solutions in which sense the
domain of the returned variable assignment provides sound information. Here, the formalization of purity of functions
based on computation trees and variable dependencies plays a crucial role.
Finally, we prove that interprocedural analysis in the style of~\cite{Cousot77-3,Apinis12} with partial tabulation
using our local solvers terminates for all non-recursive programs and every complete lattice with or without
infinite strictly ascending or descending chains.

The paper is organized as follows.
In Section~\ref{s:basics} we recall the basics of abstract interpretation
and introduce the idea of a lower monotonization of an abstract system of equations.
In Section~\ref{s:narrow} we
recapitulate widening and narrowing. As a warm-up, a terminating variant of round-robin iteration is presented
in Section~\ref{s:rr}.
In Section~\ref{s:local} we formalize the idea of local solvers based on the notion of purity of functions of
right-hand sides of abstract equation systems and provide a theorem indicating in which sense local solvers
for non-monotonic abstract systems compute sound results for concrete systems.
A first local solver is presented in
Section~\ref{s:two} where widening and narrowing is done in conceptually separated phases.
In Section~\ref{s:tsl}, we present a local solver where widening and narrowing is
intertwined.
Section~\ref{s:inter} considers the abstract equations systems encountered by interprocedural
analysis. A concept of stratification is introduced which is satisfied if the programs to be analyzed are
non-recursive. These notions enable us to prove our main result concerning termination of interprocedural analysis
with partial tabulation by means of the solvers from sections~\ref{s:two} and~\ref{s:tsl}.

 \section{Basics on Abstract Interpretation}\label{s:basics}

In the following we recapitulate the basics of abstract interpretation as introduced by Cousot and Cousot~\cite{Cousot77-1,Cousot92}.
Assume that the concrete semantics of a system is described by a system of equations
\begin{equation}
	x = f_x, \quad x\in X	\label{e:concrete}
\end{equation}
where $X$ is a set of variables taking values in some power set
lattice $(\C,\subseteq,\cup)$ where $\C = 2^Q$ for some set $Q$ of concrete program states,
and for each $x\in X$, $f_x: (X\to\C)\to\C$ is the defining right-hand side of $x$.
For the concrete system of equations, we assume that all right-hand sides $f_x, x\in X$, are \emph{monotonic}.
Accordingly, this system of equations has a unique least solution $\sigma$ which can be obtained as the least upper bound
of all assignments $\sigma_\tau$, $\tau$ an ordinal. The assignments
$\sigma_\tau: X\to\C$ are defined as follows.
If $\tau = 0$, then $\sigma_\tau\,x= \bot$ for all $x\in X$.
If $\tau = \tau'+1$ is a successor ordinal, then $\sigma_\tau\,x = f_x\,\sigma_{\tau'}$, and
if $\tau$ is a limit ordinal, then $\sigma_\tau\,x = \bigcup\{f_x\,\sigma_{\tau'}\mid \tau'<\tau\}$.
An \emph{abstract} system of equations
\begin{equation}
	y = f^\sharp_y, \quad y\in Y	\label{e:abstract}
\end{equation}
specifies an analysis of the concrete system of equations. Here, $Y$ is a set of
\emph{abstract} variables which may not necessarily be in one-to-one correspondence to the concrete variables in the set $X$.
The variables in $Y$ take values in some complete lattice $(\D,\sqsubseteq,\sqcup)$ of abstract values
and for every abstract variable $y\in Y$, $f^\sharp_y:(Y \to\D)\to\D$ is the abstract
defining right-hand side of $y$.
The elements $d\in\D$ are meant to represent invariants, i.e., properties of states.
It is for simplicity that we assume the set $\D$ of all possible invariants to form a complete lattice,
as any partial order can be embedded into a complete lattice so that all existing
least upper and greatest lower bounds are preserved~\cite{MacNeille37}.
In order to relate concrete sets of states with abstract values,
we assume that there is a Galois connection between $\C$ and $\D$,
i.e., there are monotonic functions $\alpha:\C\to\D$, $\gamma:\D\to\C$ such that
for all $c\in\C$ and $d\in\D$, $\alpha(c)\sqsubseteq d$ iff $c\subseteq \gamma(d)$.
Between the sets of concrete and abstract variables, we assume that there is a 
\emph{description relation} ${\R} \subseteq X\times Y$.
Via the Galois connection between $\C$ and $\D$, the description relation $\R$
between variables is lifted to a description relation $\R^*$ between assignments
$\sigma:X\to\C$ and $\sigma^\sharp:Y\to\D$ by defining $\sigma\R^*\sigma^\sharp$
iff for all $x\in X,y\in Y$, $\sigma(x)\subseteq\gamma(\sigma^\sharp(y))$ whenever $x \R y$ holds.
Following~\cite{Cousot92}, we do not assume that the right-hand sides of the abstract equation system 
are necessarily monotonic.
For a sound analysis, we only assume that all right-hand sides respect the description relation, i.e.,
that for all $x\in X$ and $y\in Y$ with $x \R y$,
\begin{equation}
	f_x\,\sigma\subseteq\gamma (f_y^\sharp\,\sigma^\sharp)   \label{e:soundrhs}
\end{equation}
whenever $\sigma\R^*\sigma^\sharp$ holds.
 Our key concept for proving soundness of abstract variable assignments w.r.t.\ the concrete system of equations
is the notion of the \emph{lower monotonization} of the abstract system.
For every function $f^\sharp: (Y\to\D)\to\D$ we consider the function
\begin{equation}
\underline f^\sharp\,\sigma = \bigsqcap\Set{ f^\sharp\,\sigma' | \sigma\sqsubseteq\sigma' } \label{e:lower}
\end{equation}
which we call \emph{lower monotonization} of $f^\sharp$.
By definition, we have:
\begin{lemma}\label{l:lower0}
For every function $f^\sharp:(Y\to\D)\to\D$ the following holds:
\begin{enumerate}
\item	$\underline f^\sharp$ is monotonic;
\item	$\underline f^\sharp\,\sigma^\sharp \sqsubseteq f^\sharp\,\sigma^\sharp$ for all $\sigma^\sharp$;
\item	$\underline f^\sharp = f^\sharp$ whenever $f^\sharp$ is monotonic.	\qed
\end{enumerate}
\end{lemma}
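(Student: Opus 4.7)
The proof is a direct unfolding of the definition in~(\ref{e:lower}); no step requires more than one observation about infima or about the preorder on assignments.

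For part~(1), I would fix $\sigma_1 \sqsubseteq \sigma_2$ and observe that by transitivity every $\sigma'$ with $\sigma_2 \sqsubseteq \sigma'$ also satisfies $\sigma_1 \sqsubseteq \sigma'$. Hence $\{f^\sharp\,\sigma' \mid \sigma_2 \sqsubseteq \sigma'\} \subseteq \{f^\sharp\,\sigma' \mid \sigma_1 \sqsubseteq \sigma'\}$, and taking the greatest lower bound of a smaller set can only increase the value, so $\underline f^\sharp\,\sigma_1 \sqsubseteq \underline f^\sharp\,\sigma_2$.

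For part~(2), I would note that $\sigma^\sharp$ itself belongs to $\{\sigma' \mid \sigma^\sharp \sqsubseteq \sigma'\}$ by reflexivity, so $f^\sharp\,\sigma^\sharp$ is one of the elements over which the infimum in~(\ref{e:lower}) is taken; hence $\underline f^\sharp\,\sigma^\sharp \sqsubseteq f^\sharp\,\sigma^\sharp$.

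For part~(3), suppose $f^\sharp$ is monotonic. Then for every $\sigma'$ with $\sigma^\sharp \sqsubseteq \sigma'$ we have $f^\sharp\,\sigma^\sharp \sqsubseteq f^\sharp\,\sigma'$, so $f^\sharp\,\sigma^\sharp$ is a lower bound of the set in~(\ref{e:lower}) and therefore $f^\sharp\,\sigma^\sharp \sqsubseteq \underline f^\sharp\,\sigma^\sharp$. Combined with part~(2) this gives equality on every $\sigma^\sharp$, i.e., $\underline f^\sharp = f^\sharp$. There is no real obstacle here; the only thing to double-check is that infima of arbitrary subsets of $\D$ exist, which is guaranteed since $(\D,\sqsubseteq,\sqcup)$ is a complete lattice.
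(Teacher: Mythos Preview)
Your proof is correct and matches the paper's approach: the paper simply states that the lemma holds ``by definition'' and marks it with \qed, and what you have written is precisely the direct unfolding of~(\ref{e:lower}) that justifies this. There is nothing to add or adjust.
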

The lower monotonization of the abstract system~\eqref{e:abstract} then is defined as the system
\begin{eqnarray}
y &=& \underline f^\sharp_y,\qquad y\in Y \label{e:lmono}
\end{eqnarray}
Since all right-hand sides of~\eqref{e:lmono} are monotonic, this system has a least solution.
\begin{example}\label{e:nonmon}
Consider the single equation
\[\begin{array}{l@{\;\;}c@{\;\;}l}
y_1 &=& {\sf if}\;y_1 = 0\;{\sf then}\; 1\;{\sf else}\;0
\end{array}\]
over the complete lattice of non-negative integers equipped with an infimum element,
i.e., let the domain $\mathbb{D} = \mathbb{N} \cup \{\infty\}$.
This system is not monotonic. Its lower monotonization is given by $y_1 = 0$.
\qed
\end{example}
\begin{lemma}\label{l:lower}\label{l:mono}
Assume that $\sigma$ is the least solution of the concrete system~\eqref{e:concrete}.
Then $\sigma \R^* \sigma^\sharp$ for every post-solution  $\sigma^\sharp$ of the lower monotonization~\eqref{e:lmono}.
\end{lemma}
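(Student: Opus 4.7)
The plan is to prove the lemma by transfinite induction along the Cousot iteration of the concrete system, establishing $\sigma_\tau \R^* \sigma^\sharp$ for every ordinal $\tau$, and then passing to the union $\sigma = \bigcup_\tau \sigma_\tau$ to conclude $\sigma \R^* \sigma^\sharp$. Before starting the induction, I would first establish the auxiliary fact that the soundness condition \eqref{e:soundrhs} lifts to the lower monotonization: whenever $\sigma \R^* \sigma^\sharp$ and $x \R y$, then $f_x\,\sigma \subseteq \gamma(\underline{f}^\sharp_y\,\sigma^\sharp)$. To prove this, observe that any abstract assignment $\rho \sqsupseteq \sigma^\sharp$ still satisfies $\sigma \R^* \rho$ by monotonicity of $\gamma$; hence \eqref{e:soundrhs} yields $f_x\,\sigma \subseteq \gamma(f^\sharp_y\,\rho)$ for every such $\rho$. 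Since $\gamma$ is the upper adjoint of a Galois connection it preserves arbitrary meets, and intersecting over all $\rho \sqsupseteq \sigma^\sharp$ gives $f_x\,\sigma \subseteq \gamma\bigl(\bigsqcap_{\rho \sqsupseteq \sigma^\sharp} f^\sharp_y\,\rho\bigr) = \gamma(\underline{f}^\sharp_y\,\sigma^\sharp)$.

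The transfinite induction itself is then routine. At $\tau = 0$ the assignment $\sigma_0$ is everywhere $\emptyset$, so $\sigma_0(x) \subseteq \gamma(\sigma^\sharp(y))$ trivially. At a successor $\tau = \tau' + 1$, assuming $\sigma_{\tau'} \R^* \sigma^\sharp$ and fixing any $x \R y$, the auxiliary fact gives $\sigma_\tau(x) = f_x\,\sigma_{\tau'} \subseteq \gamma(\underline{f}^\sharp_y\,\sigma^\sharp)$; the post-solution property $\underline{f}^\sharp_y\,\sigma^\sharp \sqsubseteq \sigma^\sharp(y)$ for \eqref{e:lmono}, combined with monotonicity of $\gamma$, then delivers $\sigma_\tau(x) \subseteq \gamma(\sigma^\sharp(y))$. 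At a limit $\tau$, the induction hypothesis at each $\tau' < \tau$ together with the same successor-case reasoning yields $f_x\,\sigma_{\tau'} \subseteq \gamma(\sigma^\sharp(y))$, and since $\sigma_\tau(x) = \bigcup_{\tau' < \tau} f_x\,\sigma_{\tau'}$ the inclusion is preserved under this union. Passing to $\sigma = \bigcup_\tau \sigma_\tau$ yields $\sigma \R^* \sigma^\sharp$.

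The main obstacle I anticipate is the auxiliary lifting of soundness from $f^\sharp_y$ to $\underline{f}^\sharp_y$: it requires an explicit use of the fact that $\gamma$ preserves arbitrary meets, a standard but essential property of the upper adjoint of a Galois connection, since $\underline{f}^\sharp_y\,\sigma^\sharp$ is by definition a (potentially infinite) meet rather than a single abstract value. Note in particular that Lemma~\ref{l:lower0}(2) only provides $\underline{f}^\sharp_y \sqsubseteq f^\sharp_y$ pointwise, which is the \emph{wrong} direction for deducing $f_x\,\sigma \subseteq \gamma(\underline{f}^\sharp_y\,\sigma^\sharp)$ directly from \eqref{e:soundrhs}; the meet-preservation of $\gamma$ is what bridges this gap.
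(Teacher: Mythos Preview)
Your proposal is correct and follows essentially the same approach as the paper: transfinite induction along the approximants $\sigma_\tau$ of the least concrete solution. The only notable variation is in the successor step, where the paper transposes via $\alpha$ (from $f_x\,\sigma_{\tau'}\subseteq\gamma(f^\sharp_y\,\sigma')$ to $\alpha(f_x\,\sigma_{\tau'})\sqsubseteq f^\sharp_y\,\sigma'$, then takes the meet on the abstract side), whereas you take the meet on the concrete side and invoke meet-preservation of $\gamma$; these are equivalent uses of the Galois connection.
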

\begin{proof}
For every ordinal $\tau$, let $\sigma_\tau$ denote the $\tau$th approximation of
the least solution of the concrete system and assume that
$\sigma^\sharp$ is a post-solution of the lower monotonization of the abstract system, i.e., 
$
\sigma^\sharp\,y \sqsupseteq \underline f^\sharp_y\,\sigma^\sharp
$ holds for all $y\in Y$.
By ordinal induction, we prove that $\sigma_\tau \R^* \sigma^\sharp$.
The claim clearly holds for $\tau=0$.
First assume that $\tau=\tau'+1$ is a successor ordinal, and that the claim holds for $\tau'$, i.e., $\sigma_{\tau'} \R^* \sigma^\sharp$.
Accordingly, $\sigma_{\tau'} \R^* \sigma'$ holds for all $\sigma' \sqsupseteq\sigma^\sharp$.
Consider any pair of variables $x,y$ with $x\R y$. Then
$
\sigma_\tau\,x = f_x\sigma_{\tau'}\subseteq\gamma(f_y^\sharp\,\sigma')
$
for all $\sigma' \sqsupseteq \sigma^\sharp$. Accordingly,
$
\alpha(\sigma_\tau\,x) \sqsubseteq f_y^\sharp\,\sigma'
$
for all $\sigma' \sqsupseteq \sigma^\sharp$, and therefore,
\[
\alpha(\sigma_\tau\,x)
        \sqsubseteq\bigsqcap\Set{ f_y^\sharp\,\sigma' | \sigma' \sqsupseteq \sigma^\sharp }
	= \underline f^\sharp_y\,\sigma^\sharp
	\sqsubseteq \sigma^\sharp\,y
\]
since $\sigma^\sharp$ is a post-solution.
From that, the claim follows for the ordinal $\tau$.
Now assume that $\tau$ is a limit ordinal, and that the claim holds for all ordinals $\tau'<\tau$.
Again consider any pair of variables $x,y$ with $x\R y$. Then
\[
\sigma_\tau\,x = \bigcup\Set{ \sigma_{\tau'}\,x | \tau' <\tau }
		\subseteq \bigcup\Set{\gamma(\sigma^\sharp\,y) | \tau' <\tau }
		= \gamma(\sigma^\sharp\,y )
\]
and the claim also follows for the limit ordinal $\tau$.
\qed
\end{proof}
From Lemma~\ref{l:lower} we conclude that for the abstract system from Example~\ref{e:nonmon}
the assignment $\sigma^\sharp = \{y_1 \mapsto 0\}$ is a sound description of every corresponding concrete system,
since $\sigma^\sharp$ is a post-solution of the lower monotonization $y_1 = 0$.

In general, Lemma~\ref{l:mono} provides us with a sufficient condition guaranteeing that
an abstract assignment $\sigma^\sharp$ is
sound w.r.t.\ the concrete system~\eqref{e:concrete} and the description relation $\R$,
namely, that $\sigma^\sharp$ is a post-solution of the system~\eqref{e:lmono}.
This sufficient condition is remarkable as it is an \emph{intrinsic} property of the abstract system
since it does not refer to the concrete system.
As a corollary we obtain:
\begin{corollary}\label{c:post}
Every post-solution $\sigma^\sharp$ of the abstract system~\eqref{e:abstract} is sound.
\end{corollary}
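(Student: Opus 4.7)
The plan is to reduce the statement directly to Lemma~\ref{l:mono} via Lemma~\ref{l:lower0}. Soundness here means $\sigma \R^* \sigma^\sharp$ where $\sigma$ is the least solution of the concrete system~\eqref{e:concrete}, and Lemma~\ref{l:mono} already establishes exactly this relation, but under the hypothesis that $\sigma^\sharp$ is a post-solution of the lower monotonization~\eqref{e:lmono} rather than of the original abstract system~\eqref{e:abstract}. So the whole corollary boils down to showing that every post-solution of~\eqref{e:abstract} is automatically a post-solution of~\eqref{e:lmono}.

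First I would unfold the definition of post-solution: $\sigma^\sharp$ is a post-solution of~\eqref{e:abstract} iff $f_y^\sharp\,\sigma^\sharp \sqsubseteq \sigma^\sharp\,y$ for every $y\in Y$. Then I would invoke item~2 of Lemma~\ref{l:lower0}, which states $\underline f_y^\sharp\,\sigma^\sharp \sqsubseteq f_y^\sharp\,\sigma^\sharp$. Chaining these two inequalities gives $\underline f_y^\sharp\,\sigma^\sharp \sqsubseteq \sigma^\sharp\,y$ for every $y\in Y$, i.e., $\sigma^\sharp$ is a post-solution of the lower monotonization~\eqref{e:lmono}. At that point Lemma~\ref{l:mono} applies verbatim and yields $\sigma \R^* \sigma^\sharp$, which is the desired soundness statement.

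There is essentially no obstacle: the argument is a two-line chain of inequalities followed by citation of Lemma~\ref{l:mono}. The only thing worth stating carefully in the write-up is that soundness of $\sigma^\sharp$ is being understood in the sense of the description relation lifted via the Galois connection, matching the conclusion of Lemma~\ref{l:mono}, so that the corollary's claim and the lemma's conclusion line up without ambiguity.
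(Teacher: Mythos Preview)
Your proposal is correct and follows essentially the same approach as the paper: chain the post-solution inequality $\sigma^\sharp\,y \sqsupseteq f_y^\sharp\,\sigma^\sharp$ with Lemma~\ref{l:lower0}(2) to obtain $\sigma^\sharp\,y \sqsupseteq \underline f_y^\sharp\,\sigma^\sharp$, then invoke Lemma~\ref{l:mono}. The paper's proof is just a more compressed version of exactly this argument.
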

\begin{proof}
For all $y\in Y$, 
$
\sigma^\sharp\,y\sqsupseteq f^\sharp_y\,\sigma^\sharp \sqsupseteq \underline f^\sharp_y\,\sigma^\sharp
$ holds.
Accordingly, $\sigma^\sharp$ is a post-solution of the lower monotonization of the abstract system
and therefore sound.
\qed
\end{proof}

 \section{Widening and Narrowing}\label{s:narrow}

It is instructive to recall the basic algorithmic approach to determine non-trivial post-solutions
of abstract systems \eqref{e:abstract} when the set $Y$ of variables is finite, 
all right-hand sides are monotonic and the complete lattice $\D$ has finite strictly increasing chains only.
In this case, 
\emph{chaotic iteration}
may be applied. This kind of iteration starts with the initial assignment $\underline\bot$
which assigns $\bot$ to every variable $y\in Y$ and then repeatedly evaluates right-hand sides to update the
values of variables until the values for all variables have stabilized.
This method may also be applied if right-hand sides are non-monotonic: the only modification required is
to update the value for each variable not just with the new value provided by the left-hand side, but with
some upper bound of the old value for a variable with the new value.
As a result, a \emph{post-solution} of the system is computed which, according to Corollary~\ref{c:post},
is sound.

The situation is more intricate, if the complete lattice in question has strictly ascending chains of infinite length.
Here, we follow Cousot and Cousot \cite{Cousot77-1,Cousot92,Cousot15} who suggest to accelerate iteration
by means of \emph{widening} and \emph{narrowing}.
A widening operator $\widen:\D\times\D\to\D$ takes the old value $a\in\D$ and a new value $b\in\D$
and combines them to a value $a\sqcup b\sqsubseteq a\widen b$ with the additional understanding that
for any sequence $b_i,i\geq 0,$ and any value $a_0$, the sequence $a_{i+1} = a_i\widen b_i,i\geq 0$, is
ultimately stable.
In contrast, a narrowing operator $\narrow:\D\times\D\to\D$ takes the old value $a\in\D$ and a new value 
$b\in\D$ and combines them to a value $a\narrow b$ satisfying
$a\sqcap b\sqsubseteq a\narrow b\sqsubseteq a$ --- with the additional understanding 
that for any sequence $b_i,i\geq 0,$ and any value $a_0$, the sequence $a_{i+1} = a_i\narrow b_i,i\geq 0$, is
ultimately stable.

While the widening operator is meant to reach a post-solution after a finite number of updates
to each variable of the abstract system,
the narrowing operator allows to improve upon a variable assignment once it is known to be sound.
In particular, if all right-hand sides are monotonic, the result of a narrowing iteration, if started with
a post-solution of the abstract system, again results in a post-solution. Accordingly, the returned 
variable assignment can easily be verified to be sound.
In analyzers which iterate according to the syntactical structure of programs such as {\sc Astree} \cite{Astree},
this strict separation into two phases, though, has been given up. There, when iterating over one loop, narrowing for the current
loop is triggered as soon as locally a post-solution has been attained.
This kind of intertwining widening and narrowing is systematically explored in \cite{Apinis13,Amato16}. 
There, a widening operator is combined with a narrowing operator into a single derived operator $\upd$ defined by
\[
a\upd b = \begin{array}[t]{l}
	{\bf if}\; b\sqsubseteq a\;{\bf then}\; a\narrow b	\\
	{\bf else}\; a\widen b
	\end{array}
\]
also called \emph{warrowing}.
Solvers which perform chaotic iteration and use warrowing to combine old values with new contributions,
necessarily return post-solutions --- whenever they terminate. In \cite{Apinis13,Apinis16}, termination
could only be guaranteed for systems of equations where all right-hand sides are monotonic. 
For \emph{non-monotonic} systems as may occur at interprocedural analysis, only practical evidence could 
be provided for the proposed algorithms to terminate in interesting cases.

Here, our goal is to lift these limitations by providing solvers which terminate for all finite
abstract systems of equations and all complete lattices --- no matter whether right-hand sides are monotonic 
or not. For that purpose, we dissolve the operator $\upd$ again into its components. 
Instead, we equip the solving routines with extra logic to decide when to apply which operator.

 \section{Terminating Structured Round-Robin Iteration}\label{s:rr}

Let us consider a finite abstract system as given by:
\begin{equation}
y_i	= f^\sharp_i,\qquad i=1,\dotsc,n
		\label{e:abstract-finite}
\end{equation}
In~\cite{Apinis13}, a variation of round-robin iteration is presented which 
is guaranteed to terminate for monotonic systems, while it may not terminate
for non-monotonic systems. In order to remedy this failure, we re-design this algorithm
by additionally maintaining a flag which indicates whether the variable presently under consideration
has or has not reached a sound value (Fig.~\ref{f:rr}).
\begin{figure}
\[
\begin{array}{l}
{\bf void}\;{\sf solve}(b,i)\;\{	\\
\qquad{\bf if}\; (i \leq 0)\;{\bf return};	\\
\qquad{\sf solve}(b,i-1);	\\
\qquad{\it tmp} \gets f^\sharp_i\,\sigma;	\\
\qquad{b'} \gets b;	\\
\qquad{\bf if}\; (b)\;{\it tmp} \gets \sigma[y_i] \narrow {\it tmp};	\\
\qquad{\bf else}\; {\bf if}\; ({\it tmp} \sqsubseteq \sigma[y_i])\;\{	\\
\qquad\qquad{\it tmp} \gets \sigma[y_i] \narrow {\it tmp};	\\
\qquad\qquad{b'} \gets {\bf true};	\\
\qquad\}\;{\bf else}\;{\it tmp} \gets \sigma[y_i] \widen{\it tmp};	\\
\qquad{\bf if}\; (\sigma[y_i] = {\it tmp})\;{\bf then}\;{\bf return};	\\
\qquad\sigma[y_i] \gets {\it tmp};	\\
\qquad{\sf solve}({b'},i);	\\
\}	\\
\end{array}
\]
\caption{\label{f:rr}Terminating structured round-robin iteration.}
\end{figure}
Solving starts with a call ${\sf solve}({\bf false},n)$ where $n$ is the highest priority of a
variable.
A variable $y_i$ has a higher priority than a variable $y_j$ whenever $i > j$ holds.
A call ${\sf solve}(b,i)$ considers variables up to priority $i$ only. The Boolean argument $b$ indicates whether a
sound abstraction (relative to the current values of the higher priority variables) has already been reached.
The algorithm first iterates on the lower priority variables (if there are any).
Once solving of these is completed, the right-hand side $f^\sharp_i$ of the current variable $y_i$ is evaluated
and stored in the variable ${\it tmp}$.
Additionally, $b'$ is initialized with the Boolean argument $b$.
First assume that $b$ has already the value ${\bf true}$.
Then the old value $\sigma\,y_i$ is combined with the new value in ${\it tmp}$ by means of
the narrowing operator giving the new value of ${\it tmp}$. If that is equal to the old value,
we are done and ${\sf solve}$ returns.
Otherwise, $\sigma\,y_i$ is updated to ${\it tmp}$, and ${\sf solve}({\bf true},i)$ is called tail-recursively.
Next assume that $b$ has still the value ${\bf false}$.
Then the algorithm distinguishes two cases. If the
old value $\sigma\,y_i$ exceeds the new value, the variable ${\it tmp}$ receives the combination of
both values by means of the narrowing operator. Additionally, $b'$ is set to ${\bf true}$.
Otherwise, the new value for ${\it tmp}$ is obtained by means of widening.
Again, if the resulting value of ${\it tmp}$ is equal to the current value $\sigma\,y_i$ of $y_i$,
the algorithm returns, whereas if they differ, then
$\sigma\,y_i$ is updated to ${\it tmp}$ and the algorithm recursively calls itself for the
actual parameters $(b',n)$.
In light of Theorem~\ref{t:trr}, the
resulting algorithm is called \emph{terminating structured round-robin iteration}
or \TRR\ for short.

\begin{theorem}\label{t:ssrr}\label{t:trr}
The algorithm in Figure~\ref{f:rr}
terminates for all finite abstract systems of the form~\eqref{e:abstract-finite}.
Upon termination, it returns a variable assignment $\sigma$ which is sound.
If all right-hand sides are monotonic, $\sigma$ is a post-solution.
\end{theorem}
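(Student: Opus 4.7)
The plan is to establish the three claims (termination, post-solution for monotonic systems, and soundness in general) in turn, using a uniform collection of invariants over the recursion tree of ${\sf solve}$.

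For termination I argue by induction on the priority $i$ that every call ${\sf solve}(b,i)$ returns. The base case $i\le 0$ is immediate. For the step, fix $i$ and suppose every call to ${\sf solve}$ at level $i-1$ terminates. A call ${\sf solve}(b,i)$ spawns a chain of tail-recursive calls ${\sf solve}(b_0,i), {\sf solve}(b_1,i),\dotsc$ with $b_0=b$, each preceded by a (terminating) invocation at level $i-1$. Two observations suffice: once $b_j={\bf true}$ for some $j$, every subsequent $b_k$ is also ${\bf true}$ (the ``true''-branch stores $b'=b$), so the chain decomposes into a ``false''-prefix whose updates to $\sigma[y_i]$ are widen steps, possibly terminated by a single narrow step that switches $b'$ to ${\bf true}$, and a ``true''-suffix whose updates are narrow steps. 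The stability axioms of $\widen$ and $\narrow$ then make both segments finite.

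For the monotonic case I maintain the invariant that whenever a completed call ${\sf solve}(b,i)$ returns, $\sigma[y_i]\sqsupseteq f_i^\sharp\,\sigma$ holds. A case split on the branch that produced the terminal equality $\sigma[y_i]={\it tmp}$ gives the inequality directly in the widen branch (from $a\sqcup b\sqsubseteq a\widen b$) and in the narrow-below branch (from the guard ${\it tmp}\sqsubseteq\sigma[y_i]$). For the ``true'' branch I carry a secondary invariant: at the entry of every call ${\sf solve}({\bf true},i)$ one already has $\sigma[y_i]\sqsupseteq f_i^\sharp\,\sigma$. It is established at the moment $b'$ first turns ${\bf true}$ (via the narrow-below guard) and preserved by each subsequent narrow step because $\sigma[y_i]\narrow f_i^\sharp\,\sigma\sqsupseteq\sigma[y_i]\sqcap f_i^\sharp\,\sigma=f_i^\sharp\,\sigma$; the nested call ${\sf solve}({\bf true},i-1)$ only decreases $\sigma$, and monotonicity of $f_i^\sharp$ then preserves the inequality. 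Applying the invariant to the top-level call ${\sf solve}({\bf false},n)$ yields the post-solution property for every $y_i$.

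For soundness in the general (possibly non-monotonic) case I appeal to Lemma~\ref{l:mono} and show that the returned $\sigma$ is a post-solution of the lower monotonization~\eqref{e:lmono}. The argument of the monotonic case can be replayed with $f_i^\sharp$ replaced by $\underline f_i^\sharp$, using $\underline f_i^\sharp\,\sigma\sqsubseteq f_i^\sharp\,\sigma$ to feed the widen and narrow-below cases and the monotonicity of $\underline f_i^\sharp$ from Lemma~\ref{l:lower0} to carry the ``true'' branch invariant across decreases of lower-priority slots. The main obstacle is precisely this last step: after $\sigma[y_i]$ is narrowed, the algorithm descends into ${\sf solve}({\bf true},i-1)$, which itself narrows further components of $\sigma$, and one must verify that the level-$i$ invariant survives. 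The saving grace is that in ``true'' mode every update is a narrowing step, so $\sigma$ is only decreased globally; monotonicity of $f_i^\sharp$ (respectively of $\underline f_i^\sharp$) then turns global decreases of $\sigma$ into preservation of the required inequality. Working this out via a careful structural induction on the call tree is the only non-routine part; the remainder is bookkeeping.
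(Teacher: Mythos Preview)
Your proposal is correct and follows essentially the same line as the paper's proof: induction on the priority for termination via the stability axioms of $\widen$ and $\narrow$, and soundness by showing that the returned assignment is a post-solution of the lower monotonization, exploiting that the ${\bf true}$-mode only ever decreases $\sigma$ together with monotonicity of $\underline f_i^\sharp$.

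Two remarks. First, the paper packages the soundness argument slightly differently: it introduces the parametrized subsystems ${\cal E}_{\rho,j}$ (variables $y_1,\dotsc,y_j$ with the higher-priority slots frozen to $\rho$) and proves by induction on $j$ that ${\sf solve}({\bf true},j)$ preserves post-solutions of $\underline{\cal E}_{\rho,j}$ while ${\sf solve}({\bf false},j)$ produces one. This formalism makes the ``for all $j\le i$'' part explicit, whereas your stated invariant literally speaks only of the single slot $y_i$; to conclude that the final $\sigma$ is a post-solution you still need that on the terminal return of ${\sf solve}(b,i)$ the nested ${\sf solve}(b,i-1)$ has just returned and $\sigma$ has not changed since, so that the level-$(i{-}1)$ invariant (and recursively all lower ones) carries over. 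This is easy to add, but as written the last sentence of your monotonic paragraph is a small jump. Second, the paper proves soundness directly for the lower monotonization and reads off the monotonic case via Lemma~\ref{l:lower0}(3), avoiding the duplication in your write-up; your two passes are fine but redundant.
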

\IfShort{A proof is provided in the long version of this paper~\cite{longversion}.}\IfLong{For a proof see Appendix~\ref{s:ssrr-term:proof}.}In fact, for monotonic systems, 
the new variation 
of round-robin iteration behaves identical to
the algorithm {\sf SRR} from~\cite{Apinis13}.

 \section{Local Solvers}\label{s:local}

Local solving may gain efficiency by querying the value only of a hopefully small subset of variables whose values
still are sufficient to answer the initial query.
Such solvers are at the heart of program analysis frameworks such as the {\sc Ciao} system~\cite{Hermenegildo05,Hermenegildo12} or
{\sc Goblint}.
In order to reason about \emph{partial} variable assignments as computed by local solvers,
we can no longer consider right-hand sides in equations as black boxes, but require a notion of
\emph{variable dependence}.

For the concrete system we assume that right-hand sides are mathematical functions
of type $(X\to\C)\to\C$ where for any such function $f$ and variable assignment
$\sigma:X\to\C$, we are given a superset ${\sf dep}(f,\sigma)$ of variables onto which $f$
possibly depends, i.e.,
\begin{eqnarray}\label{e:dep}
(\forall x\in {\sf dep}(f,\sigma).\,\sigma[x]=\sigma'[x]) &\Longrightarrow &	f\,\sigma=f\,\sigma'
\end{eqnarray}
for all $\sigma':X\to\C$.
Let $\sigma:X\to\C$ denote a solution of the concrete system. Then we call a subset $X'\subseteq X$
of variables $\sigma$-\emph{closed}, if for all $x\in X'$, ${\sf dep}(f_x,\sigma)\subseteq X'$. Then
for every $x$ contained in the $\sigma$-closed subset $X'$, $\sigma[x]$ can be determined
already if the values of $\sigma$ are known for the variables in $X'$ only.
 
In~\cite{mcctr-fixpt,Fecht00} it is observed that for suitable formulations of interprocedural analysis, the
set of all run-time calling contexts of procedures can be extracted from $\sigma$-closed sets of variables.
\begin{example}\label{e:inter}
The following system may arise from the analysis of a program consisting of a procedure with
a loop (signified by the program point $u$) within which the same procedure
is called twice in a row. Likewise, the procedure $p$ iterates on some program point $v$
by repeatedly applying the function $g$:
\[
\begin{array}{lll}
\angl{u,q} &=& \bigcup\Set{\angl{v,q_1} | q_1\in \bigcup\Set{\angl{v,q_2} | q_2\in\angl{u,q}}}\cup \{q\}        \\
\angl{v,q} &=& \bigcup\Set{g\,q_1 | q_1\in\angl{v,q}}\cup \{q\}
\end{array}
\]
for $q\in Q$. 
Here, $Q$ is a superset of all possible system states, and the unary function $g:Q\to 2^Q$ describes 
the operational behavior of the body of the loop at $v$.
The set of variables of this system is given by $X=\{\angl{u,q},\angl{v,q}\mid q\in Q\}$
where $\angl{u,q},\angl{v,q}$ represent the sets of program states possibly occurring at program points $u$ and $v$,
respectively, when the corresponding procedures have been called in context $q$.
For any variable assignment $\sigma$, the dependence sets of the right-hand sides are naturally defined by:
\[
\begin{array}{lll}
{\sf dep}(f_{\angl{u,q}},\sigma) &=& \{\angl{u,q}\}\cup\{\angl{v,q_2} \mid q_2\in \sigma\,\angl{u,q}\}	\\
&&\phantom{\{\angl{u,q}\}}
		\cup\{\angl{v,q_1}\mid q_2\in\sigma\angl{u,q}, q_1\in\sigma\,\angl{v,q_2}\}     \\
{\sf dep}(f_{\angl{v,q}},\sigma) &=& \{\angl{v,q}\}
\end{array}
\]
where $f_x$ again denotes the right-hand side function for a variable $x$.
Assuming that $g(q_0) = \{q_1\}$ and $g(q_1) = \emptyset$, the least solution $\sigma$ maps
$\angl{u,q_0},\angl{v,q_0}$ to the set $\{q_0,q_1\}$ and $\angl{u,q_1},\angl{v,q_1}$ to $\{q_1\}$.
Accordingly, 
the set $\{\angl{u,q_i},\angl{v,q_i}\mid i=0,1\}$ is $\sigma$-closed.
We conclude, given the program is called with initial context $q_0$, that the procedure $p$
is called with contexts $q_0$ and $q_1$ only.
\qed
\end{example}

\noindent
In concrete systems of equations, right-hand sides may depend on \emph{infinitely} many variables.
Since abstract systems are meant to give rise to effective algorithms, 
we impose more restrictive assumptions onto their right-hand side functions.
For these, we insist that only finitely many variables may be queried.
Following the considerations in~\cite{Hofmann10a,Hofmann10b,Karbyshev13Thesis}, we demand that every
right-hand side $f^\sharp$ of the abstract system is \emph{pure} in the sense of~\cite{Hofmann10a}.
This means that, operationally, the evaluation of $f^\sharp$ for any abstract variable assignment $\sigma^\sharp$
consists of a finite sequence of variable look-ups before eventually, a value is returned.
Technically, $f^\sharp$ can be represented by a \emph{computation tree}, i.e., 
is an element of
\[
{\sf tree} \Coloneqq {\sf Answer}\;\D\quad\mid\quad{\sf Query}\;Y\times(\D\to{\sf tree})
\]
Thus, a computation tree either is a leaf immediately containing a value or a query, which consists of
a variable together with a continuation which, for every possible value of the variable returns a tree
representing the remaining computation.
Each computation tree defines a function $\sem{t}:(Y\to\D)\to\D$ by:
\[
\begin{array}{lll}
\sem{{\sf Answer}\,d}\;\sigma	&=& d	\\
\sem{{\sf Query}\,(y,c)}\;\sigma	&=& \sem{c\,(\sigma[y])}\,\sigma
\end{array}
\]
Following~\cite{Hofmann10a}, the tree representation is uniquely determined by (the operational semantics of) $f^\sharp$.

\begin{example}\label{e:comp_tree}
Computation trees can be considered as generalizations of binary decision diagrams to arbitrary sets $\D$.
For example, let $\D = \mathbb{N} \cup \{\infty\}$, i.e., the natural numbers (equipped with the natural ordering
and extended with $\infty$ as top element), the function $f^\sharp:(Y\to\D)\to\D$
with $\{y_1,y_2\}\subseteq Y$, defined by
\[
f^\sharp\,\sigma = 
{\bf if}\;\sigma[y_1] > 5\;{\bf then}\;1 +\sigma[y_2]\;{\bf else}\;\sigma[y_1]
\]
is represented by the tree
\begin{flalign*}
&&
\begin{array}[b]{lll}
{\sf Query}\,(y_1,{\bf fun}\,d_1 &\to& {\bf if}\;d_1 > 5\;{\bf then}\;{\sf Query}\,(y_2,{\bf fun}\, d_2\to {\sf Answer}\,(1+d_2)) \\
&&{\bf else}\;{\sf Query}\,(y_1,{\bf fun}\,d_1\to {\sf Answer}\,d_1))
\end{array}
&&
\null\qed
\end{flalign*}
\end{example}
A set ${\sf dep}(f^\sharp,\sigma^\sharp)\subseteq Y$ with a property analogous to~\eqref{e:dep} can be explicitly
obtained from the tree representation $t$ of $f^\sharp$
by defining
${\sf dep}(f^\sharp,\sigma^\sharp)={\sf treedep}(t,\sigma^\sharp)$ where:
\[
\begin{array}{lll}
{\sf treedep}({\sf Answer}\,d,\sigma^\sharp)	&=&	\emptyset	\\
{\sf treedep}({\sf Query}\,(y,c),\sigma^\sharp)	&=&
			\{y\}\cup{\sf treedep}(c\,(\sigma^\sharp[y]),\sigma^\sharp)
\end{array}
\]
Technically, this means that the value $f^\sharp\,\sigma^\sharp = \sem{t}\,\sigma^\sharp$ can be
computed already for \emph{partial} variable assignments $\sigma':Y'\to\D$, whenever
${\sf dep}(f^\sharp,\underline\top\oplus\sigma')= {\sf treedep}(t,\underline\top\oplus\sigma')\subseteq Y'$.
Here, $\underline\top:Y\to\D$ maps each variable of $Y$ to $\top$
and $\underline\top\oplus\sigma'$ returns the value $\sigma'[y]$ for every $y\in Y'$ and
$\top$ otherwise.
\begin{example}\label{e:comp_tree_1}
Consider the function $f^\sharp$ from Example~\ref{e:comp_tree} together with the partial
assignment $\sigma'= \{y_1\mapsto 3\}$. Then ${\sf dep}(f^\sharp,\underline\top\oplus\sigma') = \{y_1\}$. 
\qed
\end{example}

\noindent
We call a partial variable assignment $\sigma':Y'\to\D$ \emph{closed} (w.r.t.\ an
abstract system~\eqref{e:abstract}),
if for all $y\in Y'$, ${\sf dep}(f_y^\sharp,\underline\top\oplus\sigma')\subseteq Y'$.

In the following, we strengthen the description relation $\R$ 
additionally to take variable dependencies into account.
We say that
the abstract system~\eqref{e:abstract} \emph{simulates} the concrete system~\eqref{e:concrete}
(relative to the description relation $\R$) iff for all pairs $x,y$ of variables with $x\R y$, such that
for the concrete and abstract right-hand sides $f_x$ and $f^\sharp_y$, respectively,
property~\eqref{e:soundrhs} holds and additionally 
${\sf dep}(f_x,\sigma)\R{\sf dep}(f^\sharp_y,\sigma^\sharp)$ whenever $\sigma \R^* \sigma^\sharp$.
Here, a pair of sets $X',Y'$ of concrete and abstract variables is in relation $\R$ if
for all $x\in X'$, $x\R y$ for some $y\in Y'$.
Theorem~\ref{p:local} demonstrates the significance of closed abstract assignments which are sound.

\begin{theorem}\label{p:local}
Assume that the abstract system~\eqref{e:abstract} simulates the concrete system~\eqref{e:concrete}
(relative to $\R$)
where $\sigma$ is the least solution of the concrete system.
Assume that $\sigma^\sharp:Y'\to\D$ is a partial assignment with the following properties:
\begin{enumerate}
\item	$\sigma^\sharp$ is closed;
\item	$\underline\top\oplus\sigma^\sharp$ is a post-solution of the lower monotonization of the
abstract system.
\end{enumerate}
Then the set $X'=\Set{x\in X | \exists\,y\in Y'.\, x\R y}$ is $\sigma$-closed.
\end{theorem}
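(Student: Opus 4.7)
The plan is to chain together three facts: Lemma~\ref{l:mono} lifted to the total assignment $\underline\top\oplus\sigma^\sharp$, the simulation hypothesis which propagates the description relation $\R$ to variable dependencies, and the closedness assumption on $\sigma^\sharp$. The structure of the argument is a one-step ``unfolding'' of the definition of $\sigma$-closedness: for an arbitrary $x\in X'$ we need to check that every variable on which $f_x$ depends under $\sigma$ lies in $X'$, and we will produce the required witness in $Y'$ from the dependency set of the corresponding abstract right-hand side.

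First I would invoke hypothesis~(2) together with Lemma~\ref{l:mono}: since $\underline\top\oplus\sigma^\sharp$ is a (total) post-solution of the lower monotonization~\eqref{e:lmono} of the abstract system, Lemma~\ref{l:mono} yields $\sigma \R^* (\underline\top\oplus\sigma^\sharp)$, where $\sigma$ is the least solution of the concrete system.

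Next, fix any $x\in X'$. By definition of $X'$ there exists $y\in Y'$ with $x\R y$. The simulation hypothesis applied to this pair, together with the description relation $\sigma\R^*(\underline\top\oplus\sigma^\sharp)$ established in the previous step, gives
\[
{\sf dep}(f_x,\sigma)\ \R\ {\sf dep}(f^\sharp_y,\underline\top\oplus\sigma^\sharp).
\]
By hypothesis~(1), i.e.\ closedness of $\sigma^\sharp$, we have ${\sf dep}(f^\sharp_y,\underline\top\oplus\sigma^\sharp)\subseteq Y'$. Unfolding the lifted relation $\R$ on sets: for every $x'\in{\sf dep}(f_x,\sigma)$ there is some $y'\in{\sf dep}(f^\sharp_y,\underline\top\oplus\sigma^\sharp)\subseteq Y'$ with $x'\R y'$, and hence $x'\in X'$ by definition of $X'$. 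This shows ${\sf dep}(f_x,\sigma)\subseteq X'$, and since $x\in X'$ was arbitrary, $X'$ is $\sigma$-closed.

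I do not expect serious obstacles. The one place that deserves care is the application of simulation: it requires the premise $\sigma\R^*(\underline\top\oplus\sigma^\sharp)$, and this is precisely where hypothesis~(2) is used via Lemma~\ref{l:mono}. A minor notational point is that the dependency sets of abstract right-hand sides are defined relative to the completion $\underline\top\oplus\sigma^\sharp$ rather than the partial $\sigma^\sharp$ itself; this matches exactly the definition of closedness of a partial assignment given just before the theorem, so the bookkeeping is consistent.
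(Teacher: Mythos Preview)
Your proof is correct and follows essentially the same route as the paper: apply Lemma~\ref{l:mono} to obtain $\sigma\R^*(\underline\top\oplus\sigma^\sharp)$, invoke the simulation hypothesis to transfer $\R$ to the dependency sets, and use closedness of $\sigma^\sharp$ to conclude that ${\sf dep}(f_x,\sigma)\subseteq X'$. Your write-up is in fact a bit more explicit than the paper's in unfolding the set-level relation $\R$ and in flagging where each hypothesis is consumed, but the argument is the same.
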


\begin{proof}
By Lemma~\ref{l:mono}, $\sigma \R^* (\underline\top\oplus\sigma^\sharp)$ holds.
Now assume that $x\R y$ for some $y\in Y'$. By definition therefore,
${\sf dep}(f_x,\sigma)\R{\sf dep}(f^\sharp_y,\underline\top\oplus\sigma^\sharp)$.
Since the latter is a subset of $Y'$,
the former must be a subset of $X'$, and the assertion follows.
\qed
\end{proof}

 \section{Terminating Structured Two-Phase Solving}\label{s:two}

We first present a local version of a two-phase algorithm to determine a
sound variable assignment for an abstract system of equations. As the algorithm is local, 
no pre-processing of the equation system is possible. Accordingly, variables where 
widening or narrowing is to be applied must be determined dynamically
(in contrast to solvers based on static variable dependencies where widening points can be statically determined~\cite{Bourdoncle1993}).
We solve this problem by assigning \emph{priorities} to variables in decreasing order
in which they are encountered, and consider a variable as a candidate for widening/narrowing
whenever it is queried during the evaluation of a lower priority variable.
The second issue is that during the narrowing iteration of the second phase, 
variables may be encountered which have not yet been seen and for which therefore no sound 
approximation is available.
In order to deal with this situation, the algorithm does not maintain a single variable assignment, 
but two distinct ones. While assignment $\sigma_0$ is used for the widening phase, $\sigma_1$ is used
for narrowing with the understanding that, once the widening phase is completed, the value of a variable $y$
from $\sigma_0$ is copied as the initial value of $y$ into $\sigma_1$. This clear distinction allows to
continue the widening iteration for every newly encountered variable $y'$ in order to determine an acceptable
initial value before continuing with the narrowing iteration.
The resulting algorithm can be found in Figures~\ref{f:two} and~\ref{f:twoprime}.
\begin{figure}
\[
\begin{array}[t]{l}
\qquad{\bf void}\;{\sf iterate}_0(n)\;\{	\\
\qquad\qquad{\bf if}\;(Q\neq\emptyset \land {\sf min\_prio}(Q) \leq n)\;\{	\\
\qquad\qquad\qquad y \gets {\sf extract\_min}(Q);	\\
\qquad\qquad\qquad\phantom{{\sf solve}_1(y,n);}	\\
\qquad\qquad\qquad{\sf do\_var}_0(y);	\\
\qquad\qquad\qquad{\sf iterate}_0(n);	\\
\qquad\qquad\}	\\
\qquad\}	\\
\qquad{\bf void}\;{\sf solve}_0(y)\;\{	\\
\qquad\qquad{\bf if}\;(y\in{\sf dom}_0)\;{\bf return};	\\
\qquad\qquad{\sf dom}_0 \gets {\sf dom}_0\cup\{y\};	\\
\qquad\qquad{\sf prio}[y] \gets {\sf next\_prio}();	\\
\qquad\qquad\sigma_0[y]\gets \bot;\\
\qquad\qquad{\sf infl}[y]\gets\emptyset;	\\
\qquad\qquad{\sf do\_var}_0(y);	\\
\qquad\qquad{\sf iterate}_0({\sf prio}[y]);	\\
\qquad\}	\\
\end{array}
\begin{array}[t]{l}
\qquad{\bf void}\;{\sf iterate}_1(n)\;\{	\\
\qquad\qquad{\bf if}\;(Q\neq\emptyset \land {\sf min\_prio}(Q) \leq n)\;\{	\\
\qquad\qquad\qquad y \gets {\sf extract\_min}(Q);	\\
\qquad\qquad\qquad{\sf solve}_1(y,{\sf prio}[y]-1);	\\
\qquad\qquad\qquad{\sf do\_var}_1(y);	\\
\qquad\qquad\qquad{\sf iterate}_1(n);	\\
\qquad\qquad\}	\\
\qquad\}	\\
\qquad{\bf void}\;{\sf solve}_1(y,n)\;\{	\\
\qquad\qquad{\bf if}\;(y\in {\sf dom}_1)\;{\bf return};	\\
\qquad\qquad{\sf solve}_0(y);\\
\qquad\qquad{\sf dom}_1 \gets {\sf dom}_1\cup\{y\};	\\
\qquad\qquad\sigma_1[y]\gets\sigma_0[y];	\\
\qquad\qquad{\bf forall}\;(z\in \{y\}\cup{\sf infl}[y])\;{\sf insert}\;z\;Q;\\
\qquad\qquad{\sf infl}[y] \gets \emptyset;	\\
\qquad\qquad{\sf iterate}_1(n);			\\
\qquad\}		\\
\end{array}
\]
\caption{\label{f:two}The solver \TWO, part 1.}
\end{figure}
\begin{figure}
\[
\begin{array}[t]{l}
\qquad{\bf void}\;{\sf do\_var}_0(y)\;\{        \\
\qquad\qquad{\it isp} \gets y \in {\sf point}; \\
\qquad\qquad{\sf point} \gets {\sf point} \backslash \{y\};\\
\qquad\qquad{\mathbb D}\;{\sf eval}_0(z)\;\{      \\
\qquad\qquad\qquad{\sf solve}_0(z);     \\
\qquad\qquad\qquad{\bf if}\;({\sf prio}[z] \geq {\sf prio}[y])\;{\sf point} \gets {\sf point}\cup\{z\}; \\
\qquad\qquad\qquad{\sf infl}[z] \gets {\sf infl}[z]\cup\{y\};   \\
\qquad\qquad\qquad{\bf return}\;\sigma_0[z]); \\
\qquad\qquad\}  \\
\qquad\qquad{\it tmp} \gets f^\sharp_y\;{\sf eval}_0;     \\
\qquad\qquad{\bf if}\;({\it isp})\;{\it tmp} \gets \sigma_0[y]\widen{\it tmp};        \\
\qquad\qquad{\bf if}\;(\sigma_0[y] = {\it tmp})\;{\bf return};       \\
\qquad\qquad\sigma_0[y]\gets{\it tmp};\\
\qquad\qquad{\bf forall}\;(z\in {\sf infl}[y])\;{\sf insert}\;z\;Q;\\
\qquad\qquad{\sf infl}[y] \gets \emptyset;      \\
\qquad\qquad{\bf return};       \\
\qquad\}
\end{array}
\begin{array}[t]{l}
\qquad{\bf void}\;{\sf do\_var}_1(y)\;\{	\\
\qquad\qquad{\it isp} \gets y \in {\sf point}; \\
\qquad\qquad{\sf point} \gets {\sf point} \backslash \{y\};\\
\qquad\qquad{\mathbb D}\;{\sf eval}_1(z)\;\{	\\
\qquad\qquad\qquad{\sf solve}_1(z,{\sf prio}[y]-1);      \\
\qquad\qquad\qquad{\bf if}\;({\sf prio}[z] \geq {\sf prio}[y])\;{\sf point} \gets {\sf point}\cup\{z\};	\\
\qquad\qquad\qquad{\sf infl}[z] \gets {\sf infl}[z]\cup\{y\};	\\
\qquad\qquad\qquad{\bf return}\;\sigma_1[z];	\\
\qquad\qquad\}	\\
\qquad\qquad{\it tmp} \gets f^\sharp_y\;{\sf eval}_1; 	\\
\qquad\qquad{\bf if}\;({\it isp})\;{\it tmp} \gets \sigma[y]\narrow{\it tmp};	\\
\qquad\qquad{\bf if}\;(\sigma_1[y] = {\it tmp})\;{\bf return};	\\
\qquad\qquad\sigma_1[y]\gets{\it tmp};\\
\qquad\qquad{\bf forall}\;(z\in {\sf infl}[y])\;{\sf insert}\;z\;Q;\\
\qquad\qquad{\sf infl}[y] \gets \emptyset;	\\
\qquad\qquad{\bf return};	\\
\qquad\}
\end{array}
\]
\caption{\label{f:twoprime}The solver \TWO, part 2.}
\end{figure}
Initially, the priority queue $Q$ and the sets ${\sf dom}_0$ and ${\sf dom}_1$ are empty. Accordingly, the mappings
$\sigma_i:{\sf dom}_i\to\D$ and ${\sf infl}:{\sf dom}\to 2^Y$ are also empty.
Likewise, the set ${\sf point}$ is initially empty.
Solving for the variable $y_0$ starts with the call ${\sf solve}_1(y_0,0)$.

Let us first consider the functions ${\sf solve}_0,{\sf iterate}_0,{\sf do\_var}_0$.
These are meant to realize a local widening iteration.
A call ${\sf solve}_0(y)$ first checks whether $y\in{\sf dom}_0$. If this is the case, solving immediately terminates.
Otherwise, $\sigma_0[y]$ is initialized with $\bot$, $y$ is added to ${\sf dom}_0$, 
the empty set is added to ${\sf infl}[y]$, and $y$ receives the next available
priority by means of the call ${\sf next\_prio}$.
Subsequently, ${\sf do\_var}_0(y)$ is called, followed by a call to ${\sf iterate}_0({\sf prio}[y])$
to complete the widening phase for $y$.
Upon termination, a call ${\sf iterate}_0(n)$ for an integer $n$ has removed all variables
of priority at most $n$ from the queue $Q$.
It proceeds as follows. If $Q$ is empty or contains only variables of priority exceeding $n$, it immediately returns.
Otherwise, the variable $y$ with least priority is extracted from $Q$. Having processed ${\sf do\_var}_0(y)$,
the iteration continues with the tail-recursive call ${\sf iterate}_0(n)$.

It remains to describe the function ${\sf do\_var}_0$.  When called for a variable $y$,
the algorithm first determines whether or not $y$ is a widening/narrowing point, i.e., contained in the set
${\sf point}$. If so, $y$ is removed from ${\sf point}$, 
and the flag ${\it isp}$ is set to ${\bf true}$.
Otherwise, ${\it isp}$ is set to ${\bf false}$.
Then the right-hand side $f^\sharp_y$ is evaluated and the result stored in the variable ${\it tmp}$.
For its evaluation, the function $f^\sharp_y$, however, does not receive the current variable assignment $\sigma_0$
but an auxiliary function ${\sf eval}_0$ which serves as a wrapper to the assignment $\sigma_0$.
The wrapper function ${\sf eval}_0$, when queried for a variable $z$, first calls ${\sf solve}_0\,(z)$
to compute a first non-trivial value for $z$.
If the priority of $z$ is greater or equal to the priority of $y$, a potential widening point
is detected. Therefore, $z$ is added to the set ${\sf point}$.
Subsequently, the fact that $z$ was queried during the evaluation of the right-hand side of $y$,
is recorded by adding $y$ to the set ${\sf infl}[z]$.
Finally, 
$\sigma_0[z]$ is returned.

Having evaluated $f^\sharp_y\,{\sf eval}_0$ and stored the result in ${\it tmp}$,
the function ${\sf do\_var}_0$ then applies widening only if ${\it isp}$ equals ${\bf true}$.
In this case, ${\it tmp}$ receives the value of $\sigma[y]\widen{\it tmp}$.
In the next step, ${\it tmp}$ is compared with the current value $\sigma_0[y]$.
If both values are equal, the procedure returns.
Otherwise, $\sigma_0[y]$ is updated to ${\it tmp}$.
The variables in ${\sf infl}[y]$ are inserted into the queue $Q$, and the set ${\sf infl}[y]$
is reset to the empty set.
Only then the procedure returns.

The functions ${\sf solve}_1$, ${\sf iterate}_1$ and ${\sf do\_var}_1$, on the other hand,
are meant to realize the narrowing phase. They essentially work analogously to the corresponding functions
${\sf solve}_0$, ${\sf iterate}_0$ and ${\sf do\_var}_0$. In particular, they re-use the mapping
${\sf infl}$ which records the currently encountered variable dependencies as well as the variable
priorities and the priority queue $Q$.
Instead of $\sigma_0$, ${\sf dom}_0$, however, they now refer to $\sigma_1$, ${\sf dom}_1$, respectively.
Moreover, there are the following differences.

First, the function ${\sf solve}_1$ now receives not only a variable,
but a pair of an integer $n$ and a variable $y$. When called, the function 
first checks whether $y\in{\sf dom}_1$. If this is the case, solving immediately terminates.
Otherwise, ${\sf solve}_0(y)$ is called first. After that call, the widening phase for $y$ 
is assumed to have terminated where the resulting value is $\sigma_0[y]$. 
Accordingly, $\sigma_1[y]$ is initialized with $\sigma_0[y]$, and $y$ is added to ${\sf dom}_1$.
As the value of $\sigma_1$ for $y$ has been updated, 
$y$ together with all variables in ${\sf infl}[y]$ are added to the queue,
whereupon ${\sf infl}[y]$ is set to the empty set, and ${\sf iterate}_1(n)$ is called
to complete the narrowing phase up to the priority $n$.
Upon termination, a call ${\sf iterate}_1(n)$ for an integer $n$ has removed all variables
of priority at most $n$ from the queue $Q$. 
In distinction to ${\sf iterate}_0$, however, it may extract variables $y$ from $Q$ which have not yet been
encountered in the present phase of iteration, i.e., are not yet included
in ${\sf dom}_1$ and thus have not yet received a value in $\sigma_1$. 
To ensure initialization, 
${\sf solve}_1(y,n)$ is called for $n={\sf prio}[y]-1$. This choice of the extra parameter $n$ ensures that
all lower priority variables have been removed from $Q$ before ${\sf do\_var}_1(y)$ is
called.

It remains to explain the function ${\sf do\_var}_1(y)$. 
Again, it essentially behaves like ${\sf do\_var}_0(y)$ --- with the distinction that the narrowing operator
is applied instead of the widening operator. Furthermore, the auxiliary local function ${\sf eval}_0$
is replaced with ${\sf eval}_1$ which now uses a call to ${\sf solve}_1$ for the initialization of its
argument variable $z$ (instead of ${\sf solve}_0$) where the extra integer argument is given by ${\sf prio}[y]-1$,
i.e., an iteration is performed to remove all variables from $Q$ with priorities lower than the 
priority of $y$ (not of $z$).

In light of Theorem~\ref{t:local-two}, we call 
the algorithm from Figures~\ref{f:two} and~\ref{f:twoprime}
\emph{terminating structured two-phase solver}.
\begin{theorem}\label{t:local-two}\label{t:term-two}
The local solver \TWO\ from Figure~\ref{f:two} and~\ref{f:twoprime}
when started with a call ${\sf solve}_1(y_0,0)$ for a variable $y_0$,
terminates for every system of equations whenever only finitely
many variables are encountered.

Upon termination, assignments $\sigma_i^\sharp:Y_i\to\D$, $i=0,1$ are obtained for finite sets $Y_0\supseteq Y_1$ 
of variables so that the following holds:
\begin{enumerate}
\item	$y_0\in Y_1$;
\item	$\sigma_0^\sharp$ is a closed partial post-solution of the abstract system~\eqref{e:abstract};
\item	$\sigma_1^\sharp$ is a closed partial assignment such that
	$\underline\top\oplus\sigma_1^\sharp$ is a post-solution of the lower monotonization of the 
	abstract system~\eqref{e:abstract}.
\end{enumerate}
\end{theorem}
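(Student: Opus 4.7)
The plan is to establish the statements in order, assuming throughout that the set $Y_0$ of variables ever added to ${\sf dom}_0$ is finite. The outer structure of the algorithm mimics that of a structured local widening/narrowing solver: new variables receive fresh dynamic priorities via ${\sf next\_prio}$, candidates for widening (resp.\ narrowing) are collected in ${\sf point}$ whenever a variable is queried from within a strictly lower-priority right-hand side, and every update propagates through ${\sf infl}$ into the priority queue $Q$ so that ${\sf iterate}_i(n)$ keeps working until no variable of priority at most $n$ remains for reconsideration.

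For termination I would proceed by induction on priorities, interleaved with structural induction on the call tree. The widening phase in isolation terminates by the standard argument: the widening operator $\widen$ guarantees that at each widening point only finitely many strict increases can occur, and outside of widening points the value is simply replaced by $f^\sharp_y\,\sigma_0$, which together with the priority ordering forces each descent of ${\sf iterate}_0(n)$ to reach an empty prefix of $Q$. The narrowing phase is analogous but complicated by the fact that ${\sf iterate}_1$ can encounter previously unseen variables and trigger nested ${\sf solve}_0$ invocations; finiteness of $Y_0$ bounds the number of such events, and between them the narrowing iteration terminates by the same priority argument, now with $\narrow$ playing the termination role.

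Closedness and the post-solution property of $\sigma_0^\sharp$ then follow from the loop invariant that, once ${\sf iterate}_0(n)$ returns, every $y \in {\sf dom}_0$ with ${\sf prio}[y] \le n$ already satisfies $\sigma_0[y] \sqsupseteq f^\sharp_y\,\sigma_0$ and every variable queried during its last evaluation lies in ${\sf dom}_0$. The argument is routine: any change to $\sigma_0[y]$ re-enqueues $y$ through ${\sf infl}[y]$, and widening at widening points only weakens the equation toward $\sqsupseteq$. For $\sigma_1^\sharp$, $y_0 \in Y_1$ is immediate from the initial call ${\sf solve}_1(y_0,0)$, and closedness follows exactly as for $\sigma_0^\sharp$ since ${\sf eval}_1$ invokes ${\sf solve}_1$ on every queried variable.

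The genuinely delicate step is to show that $\underline\top \oplus \sigma_1^\sharp$ is a post-solution of the lower monotonization~\eqref{e:lmono}. I would carry a running invariant
\[
\sigma_1[y] \;\sqsupseteq\; \underline{f^\sharp_y}\,(\underline\top \oplus \sigma_1) \qquad \text{for every } y \in {\sf dom}_1,
\]
initialized when $\sigma_1[y]\gets\sigma_0[y]$ via the post-solution property of $\sigma_0^\sharp$ together with Lemma~\ref{l:lower0}(2), and preserved by every update. The narrowing step favors this invariant: at a narrowing point the new value dominates $\sigma_1[y] \sqcap f^\sharp_y\,\sigma_1$, and both operands dominate $\underline{f^\sharp_y}\,(\underline\top \oplus \sigma_1)$ (the first by the invariant itself, the second by definition of the lower monotonization); preservation at the other variables is then immediate from the monotonicity of $\underline{f^\sharp_y}$ (Lemma~\ref{l:lower0}(1)) because $\sigma_1$ can only decrease at narrowing points. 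The hard part is controlling updates at non-narrowing points and the insertion of fresh variables mid-narrowing: these may cause $\sigma_1$ to grow locally, and one has to argue that after the ${\sf infl}$-driven re-propagation has stabilized the invariant is restored in the final state. This bookkeeping argument, tying together the roles of $Q$, ${\sf infl}$, and ${\sf point}$, is the main obstacle; once it is carried through, the required post-solution property of the lower monotonization is just the invariant read off the final $\sigma_1^\sharp$.
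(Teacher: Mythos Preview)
Your overall plan follows the paper's proof, and you have correctly located the two genuine difficulties: (i) why non-widening-point updates cannot recur infinitely often, and (ii) why the lower-monotonization inequality for $\sigma_1$ survives updates that may \emph{increase} some $\sigma_1[z]$. However, on both points you stop at the obstacle rather than dissolving it, and your stated invariant for $\sigma_1$ is too strong to be maintained.

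For soundness, the invariant
\[
\sigma_1[y] \sqsupseteq \underline{f^\sharp_y}(\underline\top\oplus\sigma_1)\quad\text{for every } y\in{\sf dom}_1
\]
is simply false at intermediate stages: an update at a non-narrowing point, or the initialization of a fresh variable, may raise $\sigma_1[z]$ and thereby raise $\underline{f^\sharp_y}(\underline\top\oplus\sigma_1)$ for some other $y$ that depends on $z$. The fix used in the paper is not the re-propagation argument you sketch, but a \emph{weakening of the invariant itself}: before each call ${\sf do\_var}_i(y_1)$ one requires the inequality only for $y\in{\sf dom}_1\setminus(Q\cup\{y_1\})$. The point is that whenever $\sigma_1[z]$ changes, every variable that last read $z$ is already recorded in ${\sf infl}[z]$ and is immediately inserted into $Q$, so it drops out of the scope of the invariant; the invariant is then re-established for such a variable the next time it is extracted and its right-hand side is recomputed. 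With this formulation your monotonicity step via Lemma~\ref{l:lower0}(1) is no longer needed to cover the ``other variables'': they are either unaffected, or in $Q$. Your initialization step ($\sigma_1[y]\gets\sigma_0[y]$ together with invariant~2 for $\sigma_0$ and Lemma~\ref{l:lower0}(2)) is exactly right.

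For termination, the sentence ``outside of widening points the value is simply replaced by $f^\sharp_y\,\sigma_0$, which together with the priority ordering forces \ldots'' does not yet give termination: for non-monotonic $f^\sharp_y$ such replacements can oscillate forever. The missing observation (which the paper makes explicit) is that a variable $y$ can be \emph{re}-inserted into $Q$ only via some ${\sf infl}[z]$ with $z$ queried by a lower-priority variable; but that very query puts $y$ into ${\sf point}$. Hence, for the variable of maximal priority among those processed infinitely often, every re-evaluation after the first is a widening (resp.\ narrowing) step, and the operator axioms finish the job. Your induction-on-priorities skeleton is compatible with this, but this mechanism linking ${\sf infl}$, $Q$, and ${\sf point}$ must be spelled out.
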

\IfShort{A proof is provided in the long version of this paper~\cite{longversion}.}\IfLong{For a proof see Appendix~\ref{s:two-term:proof}.}

 \section{Terminating Structured Mixed-Phase Solving}\label{s:tsl}

The draw-back of the two-phase solver \TWO\ from the last section is that it may lose precision
already in very simple situations. 

\begin{example}\label{e:compare}
Consider the system:
\[
\begin{array}{lll@{\qquad}lll@{\qquad}lll}
         y_1 &=&	{\sf max}(y_1, y_2)	&
       	 y_2 &=&	{\sf min}(y_3, 2)	&
	 y_3 &=&	y_2+1
\end{array}
\]
over the complete lattice ${\mathbb N}^\infty$ and the
following widening and narrowing operators:
\[
\begin{array}{lll}
a\widen b	&=& {\bf if}\,a< b\,{\bf then}\;\infty\;{\bf else}\; a	\\
a\narrow b	&=& {\bf if}\,a=\infty\,{\bf then}\;b\;{\bf else}\;a
\end{array}
\]
Then ${\sf solve}_0(y_1)$ detects $y_2$ as the only widening point resulting in
\[
\sigma_0 = \{ y_1\mapsto \infty, y_2\mapsto\infty, y_3\mapsto\infty\}
\]
A call to ${\sf solve}_1(y_1,0)$ therefore initializes $y_1$ with $\infty$ implying that
$\sigma_1[y_1] =\infty$ irrespective of the fact that $\sigma_1[y_2] = 2$.
\qed
\end{example}
We may therefore aim at intertwining the two phases into one --- without sacrificing the termination guarantee. 
The idea is to operate on a single variable assignment only and iterate on each variable first in widening and
then in narrowing mode. In order to keep soundness, after every update of a variable $y$ in the widening phase,
all possibly influenced lower priority variables are iterated upon until all stabilize with widening and
narrowing. Only then the widening iteration on $y$ continues. If on the other hand an update for $y$ occurs 
during narrowing, the iteration on possibly influenced lower priority variables is with narrowing only.
The distinction between the two modes of the iteration is maintained by a flag where
${\bf false}$ and ${\bf true}$ correspond to the widening and narrowing phases, respectively.
The algorithm is provided in Figure~\ref{f:slrone} and~\ref{f:slrtwo}.
\begin{figure}
\[
\begin{array}[t]{l}
\qquad{\bf void}\;{\sf iterate}(b,n)\;\{	\\
\qquad\qquad{\bf if}\;(Q\neq\emptyset \land {\sf min\_prio}(Q) \leq n)\;\{	\\
\qquad\qquad\qquad y \gets {\sf extract\_min}(Q);	\\
\qquad\qquad \qquad{b}' \gets {\sf do\_var}(b,y);	\\
\qquad\qquad\qquad n' \gets {\sf prio}[y];	\\
\qquad\qquad\qquad{\bf if}\;(b\neq{b}'\wedge n>n')\;\{	\\
\qquad\qquad\qquad\qquad{\sf iterate}(b',n');	\\
\qquad\qquad\qquad\qquad{\sf iterate}(b,n);	\\
\qquad\qquad\qquad\}\;{\bf else}\;{\sf iterate}(b',n);	\\
\qquad\qquad\}	\\
\qquad\}	\\
\end{array}
\qquad\begin{array}[t]{l}
\qquad{\bf void}\;{\sf solve}(y)\;\{	\\
\qquad\qquad{\bf if}\;(y\in {\sf dom})\;{\bf return};	\\
\qquad\qquad{\sf dom} \gets {\sf dom}\cup\{y\};	\\
\qquad\qquad{\sf prio}[y] \gets {\sf next\_prio}();	\\
\qquad\qquad\sigma[y]\gets \bot;\\
\qquad\qquad{\sf infl}[y]\gets\emptyset;	\\
\qquad\qquad{b}' \gets {\sf do\_var}({\bf false},y);	\\
\qquad\qquad{\sf iterate}({b}',{\sf prio}[y]);	\\
\qquad\}	\\
\end{array}
\]
\caption{\label{f:slrone}The solver \TSL, part 1.}
\end{figure}
\begin{figure}
\[
\begin{array}[t]{l}
\qquad{\bf bool}\;{\sf do\_var}(b,y)\;\{	\\
\qquad\qquad{\it isp} \gets y \in {\sf point}; \\
\qquad\qquad{\sf point} \gets {\sf point} \backslash \{y\};\\
\qquad\qquad{\mathbb D}\;{\sf eval}(z)\;\{	\\
\qquad\qquad\qquad{\sf solve}(z);	\\
\qquad\qquad\qquad{\bf if}\;({\sf prio}[z] \geq {\sf prio}[y])\;{\sf point} \gets {\sf point}\cup\{z\};	\\
\qquad\qquad\qquad{\sf infl}[z] \gets {\sf infl}[z]\cup\{y\};	\\
\qquad\qquad\qquad{\bf return}\;\sigma[z];	\\
\qquad\qquad\}	\\
\qquad\qquad{\it tmp} \gets f^\sharp_y\;{\sf eval}; 	\\
\qquad\qquad{b}' \gets b;	\\
\qquad\qquad\ldots
\end{array}
\begin{array}[t]{l}
\qquad\qquad\ldots	\\
\qquad\qquad{\bf if}\;({\it isp})\\
\qquad\qquad\qquad{\bf if}\;(b)\;{\it tmp} \gets \sigma[y]\narrow{\it tmp};	\\
\qquad\qquad\qquad{\bf else}\;{\bf if}\;({\it tmp} \sqsubseteq \sigma[y])\;\{	\\
\qquad\qquad\qquad\qquad{\it tmp} \gets \sigma[y]\narrow {\it tmp};	\\
\qquad\qquad\qquad\qquad{b}' \gets {\bf true};	\\
\qquad\qquad\qquad\;\}\;{\bf else}\;{\it tmp} \gets \sigma[y]\widen{\it tmp};	\\
\qquad\qquad{\bf if}\;(\sigma[y] = {\it tmp})\;{\bf return}\;{\bf true};	\\
\qquad\qquad\sigma[y] \gets {\it tmp};	\\
\qquad\qquad{\bf forall}\;(z\in {\sf infl}[y])\;{\sf insert}\;z\;Q;\\
\qquad\qquad{\sf infl}[y] \gets \emptyset;	\\
\qquad\qquad{\bf return}\;{b}';	\\
\qquad\}
\end{array}
\]
\caption{\label{f:slrtwo}The solver \TSL, part 2.}
\end{figure}

\noindent
Initially, the priority queue $Q$ and the set ${\sf dom}$ are empty. Accordingly, the mappings
$\sigma:{\sf dom}\to\D$ and ${\sf infl}:{\sf dom}\to Y$
are also empty.
Likewise, the set ${\sf point}$ is initially empty.
Solving for the variable $y_0$ starts with the call ${\sf solve}(y_0)$.
Solving for some variable $y$ first checks whether $y\in{\sf dom}$. If this is the case,
solving immediately terminates.
Otherwise, $y$ is added to ${\sf dom}$ and receives the next available priority
by means of a call to ${\sf next\_prio}$.
That call should provide a value which is less than any priority of a variable in ${\sf dom}$.
Subsequently, the entries $\sigma[y]$ and ${\sf infl}[y]$ are
initialized to $\bot$ and the empty set, respectively, and ${\sf do\_var}$ is called for the pair $({\bf false},y)$.
The return value of this call is stored in the Boolean variable $b'$.
During its execution, this call may have inserted further variables into the queue $Q$.
These are dealt with by the call ${\sf iterate}(b',{\sf prio}[y])$.

Upon termination, a call ${\sf iterate}(b,n)$ has removed all variables of priority at most $n$ from the queue $Q$.
It proceeds as follows. If $Q$ is empty or contains only variables of priority exceeding $n$, it immediately returns.
Otherwise, the variable $y$ with least priority $n'$ is extracted from $Q$. For $(b,y)$, ${\sf do\_var}$ is called
and the return value of this call is stored in $b'$.

Now we distinguish several cases.
If $b={\bf true}$, then the value $b'$ returned by ${\sf do\_var}$
will necessarily be ${\bf true}$ as well. In that case, iteration proceeds by tail-recursively calling
again ${\sf iterate}({\bf true},n)$.
If on the other hand $b={\bf false}$, then the value $b'$ returned by ${\sf do\_var}$
can be either ${\bf true}$ or ${\bf false}$. If $b'={\bf false}$ or $b'={\bf true}$ and $n'=n$, then
${\sf iterate}(b',n)$  is tail-recursively called. If, however, $b'={\bf true}$ and $n> n'$,
then first a sub-iteration is triggered for $({\bf true},n')$ before the main loop proceeds with the call
${\sf iterate}({\bf false},n)$.

It remains to describe the function ${\sf do\_var}$.
When called for a pair $(b,y)$ consisting of a Boolean value $b$ and variable $y$,
the algorithm first determines whether or not $y$ is a widening/narrowing point, i.e., contained in the set
${\sf point}$. If so, $y$ is removed from ${\sf point}$, and the flag ${\it isp}$ is set to ${\bf true}$.
Otherwise, ${\it isp}$ is just set to ${\bf false}$.
Then the right-hand side $f^\sharp_y$ is evaluated and the result stored in the variable ${\it tmp}$.
For its evaluation, the function $f^\sharp_y$, however, does not receive the current variable assignment $\sigma$,
but an auxiliary function ${\sf eval}$ which serves as a wrapper to $\sigma$.
The wrapper function ${\sf eval}$, when queried for a variable $z$, first calls ${\sf solve}\,z$
to compute a first non-trivial value for $z$.
If the priority of $z$ exceeds or is equal to the priority of $y$, a potential widening/narrowing point
is detected. Therefore, $z$ is added to the set ${\sf point}$.
Subsequently, the fact that the value of $z$ was queried during the evaluation of the right-hand side of $y$,
is recorded by adding $y$ to the set ${\sf infl}[z]$.
Finally, the value $\sigma[z]$ is returned.

Having evaluated $f^\sharp_y\,{\sf eval}$ and stored the result in ${\it tmp}$,
the function ${\sf do\_var}$ then decides whether to apply widening or narrowing or none of them
according to the following scheme.
If ${\it isp}$ has not been set to ${\bf true}$, no widening or narrowing is applied.
In this case, the flag $b'$ receives the value $b$.
Therefore now consider the case ${\it isp}={\bf true}$.
Again, the algorithm distinguishes three cases.
If $b={\bf true}$, then necessarily narrowing is applied, i.e., ${\it tmp}$ is updated to
the value of $\sigma[y]\narrow{\it tmp}$, and $b'$ still equals $b$, i.e., ${\bf true}$.
If $b={\bf false}$ then narrowing is applied whenever ${\it tmp}\sqsubseteq\sigma[y]$ holds.
In that case, ${\it tmp}$ is set to $\sigma[y]\narrow{\it tmp}$, and $b'$ to ${\bf true}$.
Otherwise, i.e., if $b={\bf false}$ and ${\it tmp}\not\sqsubseteq\sigma\,y$, then
widening is applied by setting ${\it tmp}$ to $\sigma[y]\widen{\it tmp}$, and $b'$ obtains the value
${\bf false}$.

In the next step, ${\it tmp}$ is compared with the current value $\sigma[y]$.
If both values are equal, the value of $b'$ is returned.
Otherwise, $\sigma[y]$ is updated to ${\it tmp}$.
The variables in ${\sf infl}[y]$ are inserted into the queue $Q$, and the set ${\sf infl}[y]$
is reset to the empty set.
Only then the value of $b'$ is returned.

\begin{example}\label{e_compare1}
Consider the system of equations from Example~\ref{e:compare}.
Calling ${\sf solve}$ for variable $y_1$ will assign the priorities $0,{-}1,{-}2$ to the variables
$y_1,y_2$ and $y_3$, respectively. Evaluation of the right-hand side of $y_1$ proceeds only after 
${\sf solve}(y_2)$ has terminated. During the first update of $y_2$, $y_2$ is inserted into 
the set ${\sf point}$, implying that at the subsequent evaluation the widening operator is applied resulting
in the value $\infty$ for $y_2$ and $y_3$. The subsequent narrowing iteration on $y_2$ and $y_3$ 
improves these values to $2$ and $3$, respectively. Only then the value for $y_1$ is determined which is $2$.
During that evaluation, $y_1$ has also been added to the set ${\sf point}$. The repeated evaluation of its right-hand
side, will however, again produce the value $2$ implying that the iteration terminates with the assignment
\begin{flalign*}
 && \sigma = \{ y_1\mapsto 2, y_2\mapsto 2, y_3\mapsto 3\} &&\null\qed
\end{flalign*}
\end{example}
In light of Theorem~\ref{t:tslr}, we call the algorithm from Figures~\ref{f:slrone} and~\ref{f:slrtwo},
\emph{terminating structured mixed-phase solver} or \TSL\ for short.
\begin{theorem}\label{t:local}\label{t:sslr-term}\label{t:tslr}
The local solver \TSL\ from Figure~\ref{f:slrone} and~\ref{f:slrtwo}
when started for a variable $y_0$,
terminates for every system of equations whenever only finitely
many variables are encountered.

Upon termination, an assignment $\sigma^\sharp:Y_0\to\D$ is returned where $Y_0$ is the set of variables encountered
during ${\sf solve}({\bf false },y_0)$ such that the following holds:
\begin{itemize}
\item	$y_0\in Y_0$,
\item	$\sigma^\sharp$ is a closed partial assignment such that
	$\underline\top\oplus\sigma^\sharp$ is a post-solution of the lower monotonization of the
	abstract system~\eqref{e:abstract}.
\end{itemize}
\end{theorem}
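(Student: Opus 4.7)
The plan is to prove the theorem by a joint induction on the call structure of \TSL\ that simultaneously establishes termination and three soundness-relevant invariants. The invariants are: \emph{(Priority)} upon return of ${\sf iterate}(b,n)$, the queue $Q$ contains no variable of priority $\leq n$; \emph{(Influence)} whenever $z$ is queried during an evaluation of $f^\sharp_y$, then $y\in{\sf infl}[z]$, so every update of $\sigma[z]$ re-queues $y$; and \emph{(Closure)} every variable queried via ${\sf eval}$ is first pulled into ${\sf dom}$ by ${\sf solve}$, which also assigns it a fresh priority strictly lower than anything already present.

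For termination I would argue by a lexicographic measure indexed by the (finite) set of encountered priorities. A call ${\sf iterate}({\bf true},n)$ drives only narrowing, because ${\sf do\_var}({\bf true},\cdot)$ returns ${\bf true}$ and applies $\narrow$ at every widening/narrowing point; by the descending-chain property of $\narrow$ each variable's value can be updated only finitely often, so termination follows by the standard round-robin argument. For ${\sf iterate}({\bf false},n)$ the critical event is a phase flip at some variable $y$ of priority $n'\leq n$, which triggers the sub-iteration ${\sf iterate}({\bf true},n')$; that sub-iteration terminates by the previous case and stabilizes all variables of priority $\leq n'$ in narrowing mode. I would isolate a stability lemma: once ${\sf iterate}({\bf true},n')$ has returned, a variable of priority $\leq n'$ can be re-queued during the continuation of the outer call only through an update of some variable of priority strictly greater than $n'$, and widening chains at those higher priorities are themselves finite. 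Chaining these observations through the induction on priorities closes termination.

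For soundness, once the outermost ${\sf iterate}$ returns, the Priority invariant at level ${\sf prio}[y_0]$ gives $Q=\emptyset$, and the Influence invariant ensures that for every $y\in Y_0$ the value of $f^\sharp_y\,{\sf eval}$ computed at the last invocation of ${\sf do\_var}$ on $y$ satisfies $f^\sharp_y\,{\sf eval}\sqsubseteq\sigma[y]$: in the $\widen$ case this uses $a\sqcup b\sqsubseteq a\widen b$ to extract the upper-bound property from a stabilized update, and in the $\narrow$ case it uses that narrowing was entered only when $f^\sharp_y\,{\sf eval}\sqsubseteq\sigma[y]$ held and no later re-queueing occurred. By purity, the Closure invariant, and the fact that ${\sf eval}(z)=\sigma[z]=(\underline\top\oplus\sigma^\sharp)[z]$ for every $z\in Y_0$ actually queried, this inequality rewrites to $f^\sharp_y\,(\underline\top\oplus\sigma^\sharp)\sqsubseteq\sigma^\sharp[y]$. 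Definition~\eqref{e:lower} then gives $\underline f^\sharp_y\,(\underline\top\oplus\sigma^\sharp)\sqsubseteq f^\sharp_y\,(\underline\top\oplus\sigma^\sharp)\sqsubseteq\sigma^\sharp[y]$, so $\underline\top\oplus\sigma^\sharp$ is a post-solution of the lower monotonization; closedness of $\sigma^\sharp$ and the membership $y_0\in Y_0$ are immediate from Closure and from the initial call ${\sf solve}(y_0)$.

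The main obstacle I expect is the termination argument in the presence of phase flips: an update to a widened high-priority variable can re-queue already-narrowed lower-priority variables, and setting up the lexicographic measure so that every re-entry into widening at priority $\leq n'$ is charged to strict widening progress at some priority $>n'$ is the delicate book-keeping at the heart of the proof. A secondary subtlety is verifying the narrowing-point case of the soundness argument, where I need to trace back through the Influence invariant to confirm that no subsequent ${\sf do\_var}$ invocation on $y$ could have reset it without also updating $\sigma[y]$.
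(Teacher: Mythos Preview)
Your termination sketch is broadly in line with the paper's argument (which also treats ${\sf iterate}({\bf true},\cdot)$ first and then ${\sf iterate}({\bf false},\cdot)$, picking the variable of maximal priority that would be re-evaluated infinitely often and showing it is eventually in {\sf point}, hence subject to a stabilizing $\narrow$- or $\widen$-chain). Your lexicographic packaging is a reasonable alternative to the paper's contradiction argument, though you should make explicit that from some point on ${\sf solve}(z)$ inside ${\sf eval}$ is a no-op because ${\sf dom}$ has stopped growing.

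The soundness argument, however, has a real gap. You aim to show that for every $y\in Y_0$ the last call ${\sf do\_var}(b,y)$ leaves $f^\sharp_y\,{\sf eval}\sqsubseteq\sigma[y]$, and hence that $\underline\top\oplus\sigma^\sharp$ is a post-solution of the \emph{original} abstract system. That conclusion is strictly stronger than what the theorem asserts and is in general \emph{false}: the paper notes explicitly that \TSL\ is ``not bound'' to return a post-solution of~\eqref{e:abstract}. A concrete failure: take a non-monotonic $f^\sharp$ with $f^\sharp(d)=\top$ for some $d\sqsubset\top$, and the narrowing $a\narrow b = b$ if $a=\top$ else $a$. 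After widening to $\top$ and one narrowing step to $d$, the next call has $b={\bf true}$, computes ${\it tmp}=f^\sharp(d)=\top$, applies $d\narrow\top=d$, finds $\sigma[y]=d$ unchanged, and terminates with $f^\sharp_y(\underline\top\oplus\sigma^\sharp)=\top\not\sqsubseteq d$. The flaw in your reasoning is the sentence ``narrowing was entered only when $f^\sharp_y\,{\sf eval}\sqsubseteq\sigma[y]$ held'': once $b={\bf true}$, the code applies $\narrow$ \emph{unconditionally}, with no comparison test, and stabilization $\sigma[y]=\sigma[y]\narrow{\it tmp}$ does not imply ${\it tmp}\sqsubseteq\sigma[y]$.

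The paper avoids this by never trying to bound $f^\sharp_y$ itself after the phase flip. Its running invariant is the weaker statement $\sigma[y]\sqsupseteq\underline f^\sharp_y(\underline\top\oplus\sigma)$ for every $y\in{\sf dom}\setminus(Q\cup\{y_1\})$, formulated directly with the lower monotonization. This is established at the moment $b'$ flips to ${\bf true}$ (there one does have $\sigma[y]\sqsupseteq f^\sharp_y(\underline\top\oplus\sigma)\sqsupseteq\underline f^\sharp_y(\underline\top\oplus\sigma)$), and it survives every subsequent narrowing step because both the old $\sigma[y]$ and the fresh ${\it tmp}=f^\sharp_y(\underline\top\oplus\sigma)$ dominate $\underline f^\sharp_y(\underline\top\oplus\sigma)$, hence so does their meet and a fortiori $\sigma[y]\narrow{\it tmp}$. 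You should restructure your soundness invariant along these lines; the step from~\eqref{e:lower} that you invoke at the very end cannot repair an inequality that was never established.
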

\IfShort{A proof is provided in the long version of this paper~\cite{longversion}.}\IfLong{For a proof see Appendix~\ref{s:slr-term:proof}.}

 \section{Interprocedural Analysis}\label{s:inter}

As seen in example \ref{e:inter}, the concrete semantics of programs with
procedures can be formalized by a system of equations over a set of variables
$X = \{ \angl{u,q} \mid u \in U, q \in Q \}$ where
$U$ is a finite set of program points and $Q$ is the set of possible system states.
A corresponding abstract system of equations for interprocedural analysis can be formalized
using abstract variables from the set
$Y = \{ \angl{u,a} \mid u \in U, a \in \D \}$ where the complete lattice $\D$ of abstract values may
also serve as the set of abstract calling contexts for which each program point $u$ may be analyzed.
The description relation $\R$ between concrete and abstract variables is then given by
$\angl{u,q} \R \angl{u,a} \defiff q \in \gamma(a)$
for all $\angl{u,q} \in X$ and $\angl{u,a} \in Y$ and program points $u \in U$.
Moreoever, we require that for all right-hand sides $f_x$ of the concrete system and
$f^\sharp_y$ of the abstract system that $f_x\,q \subseteq \gamma(f^\sharp_y\,a)$ holds,
whenever $x \R y$ and $q \in \gamma(a)$.
Right-hand sides for abstract variables are given by expressions $e$ according to
the following grammar:
\[
e\quad\Coloneqq\quad	d
		\mid	\alpha
		\mid	g^\sharp\,e_1\cdots\,e_k
		\mid	\angl{u,e}
\]
where $d\in\D$ denotes arbitrary constants,
$\alpha$ is a dedicated variable representing the current calling context,
$g^\sharp:\D\to\cdots\to\D$ is a $k$-ary
function, and $\angl{u,e}$ with $u\in U$ refers to a variable of the equation system.
Each expression $e$ describes a function $\semSh{e}:\D\to (Y\to\D)\to\D$ which is defined by:
\[
\begin{array}[t]{lll}
\semSh{d}\,a\,\sigma	&=&	d	\\
\semSh{\alpha}\,a\,\sigma	&=&	a	\\
\end{array}\qquad\qquad
\begin{array}[t]{lll}
\semSh{g^\sharp\,e_1\cdots e_k}\,a\,\sigma	&=& g^\sharp\,(\semSh{e_1}\,a\,\sigma) \cdots (\semSh{e_k}\,a\,\sigma)	\\
\semSh{\angl{u,e}}\,a\,\sigma	&=& \sigma\,\angl{u,\semSh{e}\,a\,\sigma}
\end{array}
\]
A finite representation of the abstract system of equations then is given by the finite set of schematic equations
\[
\angl{u,\alpha} = e_u,\qquad u\in U
\]
for expressions $e_u$. Each schematic equation $\angl{u,\alpha} =e_u$ denotes
the (possibly infinite) family of equations for the variables $\angl{u,a},a\in\D$.
For each $a\in\D$, the right-hand side function of $\angl{u,a}$ is given by
the function $\semSh{e_u}\,a$.
This function is indeed pure for every expression $e_u$ and every $a\in\D$.
Such systems of equations have been used, e.g., in~\cite{Cousot77-2,Apinis12} to specify interprocedural analyses.
\begin{example}\label{e:inter-1}
Consider the schematic system:
\[
\begin{array}{lll@{\qquad}lll}
\angl{u,\alpha} &=& \angl{v,\angl{v,\angl{u,\alpha}}}\sqcup\alpha    &
\angl{v,\alpha} &=& g^\sharp\,\angl{v,\alpha}\sqcup \alpha
\end{array}
\]
for some unary function $g^\sharp:\D\to\D$.
The resulting abstract system simulates the concrete system from Example~\ref{e:inter},
if $g(q) \subseteq \gamma(g^\sharp(a))$ holds whenever $q\in\gamma(a)$.
\qed
\end{example}
As we have seen in the example, function calls result in indirect addressing via nesting of variables.
In case that the program does not have recursive procedures,
there is a mapping $\lambda:U\to\N$ so that for every $u$ with current calling context $\alpha$, right-hand side $e_u$ and every
subexpression $\angl{u',e'}$ of $e_u$ the following holds:
\begin{itemize}
\item	If $\lambda(u')=\lambda(u)$, then $e'=\alpha$;
\item	If $\lambda(u')\neq\lambda(u)$, then $\lambda(u') < \lambda(u)$.
\end{itemize}
If this property is satisfied, we call the equation scheme \emph{stratified} where $\lambda(u)$ is
the \emph{level} of $u$.
Intuitively, stratification means that a new context is created only for some point $u'$ of a strictly lower level.
For the interprocedural analysis as formalized, e.g., in \cite{Apinis12}, all program points of a given procedure may 
receive the same level while the level decreases whenever another procedure is called.
The system from Example~\ref{e:inter-1} is stratified: we may, e.g., define
$\lambda(u)=2$ and $\lambda(v)=1$.
\begin{theorem}\label{t:inter-term}
The solver \TWO\ as well as the solver \TSL\ terminate for stratified equation schemes.
\end{theorem}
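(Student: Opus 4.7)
By Theorems~\ref{t:local-two} and~\ref{t:tslr}, each of \TWO\ and \TSL\ terminates whenever only finitely many abstract variables are encountered during its execution; it therefore suffices to bound the set of encountered variables when the solver is started on some $\angl{u_0,a_0}$ whose scheme is stratified.  I prove this bound by induction on the level $\ell = \lambda(u_0)$.

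For the base case where $\ell$ is minimal, stratification forces every subexpression $\angl{u',e'}$ appearing inside any right-hand side $e_u$ reachable during the solve to satisfy $\lambda(u') = \ell$ and $e' = \alpha$.  Hence every variable that can ever be queried from $\angl{u_0,a_0}$ has the form $\angl{u,a_0}$ with $\lambda(u) = \ell$, and there are at most $|\{u \in U : \lambda(u) = \ell\}|$ of them.

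For the inductive step, assume the claim for all strictly smaller levels.  By stratification, every reachable level-$\ell$ variable still has context $a_0$, so at most $|\{u \in U : \lambda(u) = \ell\}|$ such variables appear.  Each of them is updated only finitely often --- a bounded widening segment followed by a bounded narrowing segment, by the intrinsic stabilization guarantees of $\widen$ and $\narrow$ applied to the actual sequence of proposed values --- so its right-hand side is evaluated only finitely often.  Each such evaluation traverses a finite path through the corresponding computation tree (Section~\ref{s:local}) and so consults only finitely many variables; consequently, only finitely many distinct lower-level variables $\angl{u'',b}$ with $\lambda(u'') < \ell$ are ever queried overall.  Each such query launches a sub-solve which, by the induction hypothesis, encounters only finitely many further variables.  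A finite union of finite sets is finite, completing the induction, and Theorems~\ref{t:local-two} and~\ref{t:tslr} then deliver termination.

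The main obstacle is justifying that each level-$\ell$ right-hand side is evaluated only finitely often.  The operators $\widen$ and $\narrow$ are guaranteed to stabilize on any input sequence in isolation, but inside the solvers the updates at a single variable are interleaved with queue re-insertions triggered by updates to its influencers and, for \TWO, with brand-new variables entering the narrowing phase and spawning fresh widening iterations.  Making the stabilization guarantee actually bite requires the observation that the ${\sf point}$ set maintained by ${\sf eval}$ marks every back-edge endpoint in the dynamic dependency graph as a widening/narrowing point, so that every cyclic same-level dependency is broken at a point where $\widen$ or $\narrow$ is actually applied.
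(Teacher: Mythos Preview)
Your overall strategy matches the paper's: induct on the level, use stratification to see that all same-level variables encountered share the single context $a_0$ (hence finitely many of them), handle lower levels by the inductive hypothesis, and conclude via Theorems~\ref{t:local-two} and~\ref{t:tslr}. The paper phrases its inductive hypothesis directly as termination of every call ${\sf solve}\,\angl{u,a'}$ with $\lambda(u)<k$, from any solver state, rather than as finiteness of the encountered set.

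The step you flag as the ``main obstacle'' is genuinely the crux, and your resolution is not complete. To bound the number of lower-level queries you first assert that each level-$\ell$ variable is updated only finitely often, citing the stabilization guarantees of $\widen$ and $\narrow$. But in both solvers an update applies $\widen$ or $\narrow$ only when the variable currently lies in ${\sf point}$; showing that this eventually always holds for the variable of maximal priority among those evaluated infinitely often is exactly the priority-based argument inside the proofs of Theorems~\ref{t:local-two} and~\ref{t:tslr}, and those proofs are carried out under the hypothesis that the set of encountered variables is finite---precisely what you are trying to establish. Pointing at the ${\sf point}$/back-edge mechanism in your last paragraph does not by itself break this circularity. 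What is needed is to replay that priority argument restricted to the finitely many level-$\ell$ variables, treating each lower-level ${\sf solve}$ as a terminating black box (granted by the inductive hypothesis) which, by stratification, never itself evaluates a level-$\ell$ right-hand side. The paper's own proof is also terse here (it asserts that the call ``encounters only finitely many variables'' without bounding how many lower-level sub-solves are launched), but by taking \emph{termination} rather than \emph{finiteness} as the inductive hypothesis it makes the appeal to the lower-level black boxes somewhat more direct.
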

\begin{proof}
We only consider the statement of the theorem for solver \TSL{}.
Assume we run the solver \TSL\ on an abstract system specified by a stratified equation scheme.
In light of Theorem~\ref{t:sslr-term}, it suffices to prove that for every $u\in U$, only finitely
many contexts $a\in\D$ are encountered during fixpoint computation.
First, we note that variables $\angl{v,a}$ may not be influenced by variables $\angl{u,a'}$ with
$\lambda(u)>\lambda(v)$.
Second, let $D_{v,a}$ denote the set of variables $\angl{u,a'}$ with $\lambda(u) =\lambda(v)$
onto which $\angl{v,a'}$ may depend. Then all these variables share the same context.
We conclude that new contexts for a point $v$ at some level $k$ are created only by
the evaluation of right-hand sides of variables of smaller levels.
For each level $k$, let $U_k\subseteq U$ denote the set of all $u$ with $\lambda(u)\leq k$.
We proceed by induction on $k$.
Assume that we have proven termination for all calls ${\sf solve}\,\angl{u,a'}$, $\lambda(u)<k$
for any subset of variables $\angl{u',a''}$ which have already been solved.
Then evaluating a call ${\sf solve}\,\angl{v,a}$ with $\lambda(v)=k$ will either query the
values of other variables $\angl{v',a'}$ where $\lambda(v')=k$. In this case, $a'=a$.
Therefore, only finitely many of these are encountered.
Or variables $\angl{v',a'}$ are queried with $\lambda(v')<k$. For those which have not yet been encountered
${\sf solve}\,\angl{v',a'}$ is called. By induction hypothesis, all these calls terminate
and therefore query only finitely many variables.
As the evaluation of ${\sf call}\,\angl{v,a}$ encounters only finitely many variables, it
terminates.
\qed
\end{proof}
A similar argument explains why interprocedural analyzers based on the functional approach of Sharir/Pnueli
\cite{SharirPnueli81,AltMartin95} terminate not only for finite domains but also for full constant propagation --- 
if only the programs are non-recursive.

 \section{Conclusion}\label{s:conclusion}

We have presented local solvers which are guaranteed to terminate for all abstract systems of equations
given that only finitely many variables are encountered --- irrespective of whether right-hand sides of
the equations are monotonic or not or whether the complete lattice has infinite strictly ascending/descending 
chains or not. 
Furthermore, we showed that interprocedural analysis with partial tabulation
of procedure summaries based on these solvers is guaranteed to terminate with the only assumption
that the program has no recursive procedures.
Clearly, theoretical termination proofs may only give an indication that the proposed algorithms
are well-suited as fixpoint engines within a practical analysis tool.
Termination within reasonable time and space bounds is another issue. 
The numbers provided by our preliminary practical experiments within the analysis framework 
{\sc Goblint} seem encouraging (see Appendix~\ref{s:experiments}).
Interestingly, a direct comparison of the two-phase versus mixed-phase solver for full context-sensitive interprocedural
analysis, indicated that \TSL\ was virtually always faster, while the picture w.r.t.\ precision is not so clear.
Also, the new solvers always returned post-solutions of the abstract systems --- although 
they are not bound to do so.

There are several ways how this work can be extended. Our techniques crucially require a Galois connection to relate the concrete with the abstract domain. It is not clear how this restriction can be lifted. Also one may think of
extending two phased approaches to a many-phase iteration as suggested in~\cite{Cousot15}.

\appendix

\bibliographystyle{abbrv}

\begin{thebibliography}{10}

\bibitem{AltMartin95}
M.~Alt and F.~Martin.
\newblock Generation of Efficient Interprocedural Analyzers with {PAG}.
\newblock In {\em Static Analysis, 2nd International Symposium ({SAS})}, pages
  33--50. Springer, {LNCS} 983, 1995.

\bibitem{Amato16}
G.~Amato, F.~Scozzari, H.~Seidl, K.~Apinis, and V.~Vojdani.
\newblock Efficiently intertwining widening and narrowing.
\newblock {\em Science of Computer Programming}, 120:1--24, 2016.

\bibitem{Apinis12}
K.~Apinis, H.~Seidl, and V.~Vojdani.
\newblock Side-Effecting Constraint Systems: A Swiss Army Knife for Program
  Analysis.
\newblock In {\em Programming Languages and Systems - 10th Asian Symposium
  ({APLAS})}, pages 157--172. Springer, {LNCS} 7705, 2012.

\bibitem{Apinis13}
K.~Apinis, H.~Seidl, and V.~Vojdani.
\newblock How to Combine Widening and Narrowing for Non-monotonic Systems of
  Equations.
\newblock In {\em 34th {ACM} {SIGPLAN} Conference on Programming Language
  Design and Implementation ({PLDI})}, pages 377--386. {ACM}, 2013.

\bibitem{Apinis16}
K.~Apinis, H.~Seidl, and V.~Vojdani.
\newblock Enhancing Top-Down Solving with Widening and Narrowing.
\newblock In {\em Semantics, Logics, and Calculi - Essays Dedicated to Hanne
  Riis Nielson and Flemming Nielson on the Occasion of Their 60th Birthdays},
  pages 272--288. Springer, {LNCS} 9560, 2016.

\bibitem{Bagnara05}
R.~Bagnara, P.~M. Hill, E.~Ricci, and E.~Zaffanella.
\newblock Precise widening operators for convex polyhedra.
\newblock {\em Science of Computer Programming}, 58(1-2):28--56, 2005.

\bibitem{Bourdoncle1993}
F.~Bourdoncle.
\newblock Efficient chaotic iteration strategies with widenings.
\newblock In D.~Bj{\o}rner, M.~Broy, and I.~V. Pottosin, editors, {\em Formal
  Methods in Programming and Their Applications}, pages 128--141. Springer,
  {LNCS} 735, 1993.

\bibitem{Chen10}
L.~Chen, A.~Min\'{e}, J.~Wang, and P.~Cousot.
\newblock An Abstract Domain to Discover Interval Linear Equalities.
\newblock In {\em Verification, Model Checking, and Abstract Interpretation,
  11th International Conference ({VMCAI})}, pages 112--128. Springer, {LNCS}
  5944, 2010.

\bibitem{Cousot15}
P.~Cousot.
\newblock Abstracting Induction by Extrapolation and Interpolation.
\newblock In {\em Verification, Model Checking, and Abstract Interpretation -
  16th International Conference ({VMCAI})}, pages 19--42. Springer, {LNCS}
  8931, 2015.

\bibitem{Cousot77-1}
P.~Cousot and R.~Cousot.
\newblock Abstract Interpretation: A Unified Lattice Model for Static Analysis
  of Programs by Construction or Approximation of Fixpoints.
\newblock In {\em Fourth {ACM} Symposium on Principles of Programming Languages
  ({POPL})}, pages 238--252. {ACM}, 1977.

\bibitem{Cousot77-2}
P.~Cousot and R.~Cousot.
\newblock Static Determination of Dynamic Properties of Generalized Type
  Unions.
\newblock In {\em {ACM} Conference on Language Design for Reliable Software
  ({LDRS})}, pages 77--94. {ACM}, 1977.

\bibitem{Cousot77-3}
P.~Cousot and R.~Cousot.
\newblock Static Determination of Dynamic Properties of Recursive Procedures.
\newblock In {\em {IFIP} Conf{.} on Formal Description of Programming
  Concepts}, pages 237--277. North-Holland, 1977.

\bibitem{Cousot92}
P.~Cousot and R.~Cousot.
\newblock Abstract Interpretation Frameworks.
\newblock {\em J. Log. Comput.}, 2(4):511--547, 1992.

\bibitem{Astree}
P.~Cousot, R.~Cousot, J.~Feret, L.~Mauborgne, A.~Min\'{e}, and X.~Rival.
\newblock Why does {Astr\'ee} scale up?
\newblock {\em Formal Methods in System Design}, 35(3):229--264, 2009.

\bibitem{Fecht96}
C.~Fecht and H.~Seidl.
\newblock An Even Faster Solver for General Systems of Equations.
\newblock In {\em Static Analysis, Third International Symposium ({SAS})},
  pages 189--204. Springer, {LNCS} 1145, 1996.

\bibitem{Gonnord06}
L.~Gonnord and N.~Halbwachs.
\newblock Combining Widening and Acceleration in Linear Relation Analysis.
\newblock In {\em Static Analysis, 13th International Symposium ({SAS})}, pages
  144--160. Springer, {LNCS} 4134, 2006.

\bibitem{Hermenegildo12}
M.~V. Hermenegildo, F.~Bueno, M.~Carro, P.~L\'{o}pez-Garc\'{i}a, E.~Mera, J.~F.
  Morales, and G.~Puebla.
\newblock An Overview of {Ciao} and Its Design Philosophy.
\newblock {\em Theory Pract. Log. Program.}, 12(1-2):219--252, 2012.

\bibitem{Hermenegildo05}
M.~V. Hermenegildo, G.~Puebla, F.~Bueno, and P.~L\'{o}pez-Garc\'{i}a.
\newblock Integrated Program Debugging, Verification, and Optimization Using
  Abstract Interpretation (and The {Ciao} System Preprocessor).
\newblock {\em Science of Computer Programming}, 58(1-2):115--140, 2005.

\bibitem{Hofmann10b}
M.~Hofmann, A.~Karbyshev, and H.~Seidl.
\newblock Verifying a Local Generic Solver in {Coq}.
\newblock In {\em Static Analysis, 17th International Symposium ({SAS})}, pages
  340--355. Springer, {LNCS} 6337, 2010.

\bibitem{Hofmann10a}
M.~Hofmann, A.~Karbyshev, and H.~Seidl.
\newblock What is a Pure Functional?
\newblock In {\em 37th International Colloquium Conference on Automata,
  Languages and Programming ({ICALP})}, pages 199--210. Springer, {LNCS} 6199,
  2010.

\bibitem{Karbyshev13Thesis}
A.~Karbyshev.
\newblock {\em Monadic Parametricity of Second-Order Functionals}.
\newblock PhD thesis, Institut f\"{u}r Informatik, Technische Universit\"{a}t
  M\"{u}nchen, September 2013.

\bibitem{MacNeille37}
H.~M. MacNeille.
\newblock Partially ordered sets.
\newblock {\em Trans. Amer. Math. Soc.}, 42(3):416--460, 1937.

\bibitem{mcctr-fixpt}
K.~Muthukumar and M.~V. Hermenegildo.
\newblock Deriving A Fixpoint Computation Algorithm for Top-down Abstract
  Interpretation of Logic Programs.
\newblock Technical Report ACT-DC-153-90, Microelectronics and Computer
  Technology Corporation (MCC), Austin, TX 78759, April 1990.

\bibitem{Fecht00}
H.~Seidl and C.~Fecht.
\newblock Interprocedural Analyses: A Comparison.
\newblock {\em Journal of Logic Programming}, 43(2):123--156, 2000.

\bibitem{SharirPnueli81}
M.~Sharir and A.~Pnueli.
\newblock Two approaches to interprocedural data flow analysis.
\newblock In S.~Muchnick and N.~Jones, editors, {\em Program Flow Analysis:
  Theory and Application}, pages 189--233. Prentice-Hall, 1981.

\end{thebibliography}

\section{Experimental Evaluation}\label{s:experiments}

We implemented the solvers \TWO{} and \TSL{} presented in Sections~\ref{s:two} and~\ref{s:tsl} within the
analysis framework {\sc Goblint}\footnote{http://goblint.in.tum.de/}.
For that, these solvers have been extended to deal with \emph{side-effects}
(see~\cite{Apinis12} for a detailed discussion of this mechanism) to jointly deal with
flow- and context-sensitive and flow-insensitive analyses.
In order to perform a fair comparison of the new solvers with \emph{warrowing}-based local solving
as proposed in~\cite{Apinis13,Amato16}, we provided a simplified version of \TSL{}. This simplified
solver performs priority based iteration in the same way as \TSL{} but uses the warrowing operator
instead of selecting operators according to extra flags.
These three solvers were evaluated on the SPECint benchmark suite\footnote{https://www.spec.org/cpu2006/CINT2006/}
consisting of not too small real-world C programs (\num{1600} to \num{34000} LOC). Furthermore, the following
C programs where analyzed:
\mbox{ent}\footnote{http://www.fourmilab.ch/random/ (version 28.01.2008)},
\mbox{figlet}\footnote{http://www.figlet.org/},
\mbox{maradns}\footnote{http://www.maradns.org/},
\mbox{wget}\footnote{https://www.gnu.org/s/wget/},
and some programs from the \mbox{coreutils}\footnote{https://www.gnu.org/s/coreutils/} package.
The analyzed program \mbox{wget} is the largest one with around \num{77000} LOC\@.

The analyses which we performed are put on top of a basic analysis of pointers, strings and enums.
For {\sf enum} variables, \emph{sets} of possibles values are maintained.
The benchmark programs were analyzed with full context-sensitivity of local data while globals were
treated flow-insensitively.

The experimental setting is a fully context-sensitive interval analysis of {\sf int} variables.
Therefore, program \mbox{482.sphinx} had to be excluded from the benchmark suite since it uses
procedures which recurse on {\sf int} arguments.
Interestingly, the \emph{warrowing} solver behaves exactly the same as \TSL{} on all of our benchmark programs.
We interpret this by the fact that for the given analysis the right-hand sides are effectively monotonic.
Accordingly, Figure~\ref{f:bench2} only reports the relative precision of \TWO\ compared to \TSL\@.

\begin{figure}[tb]
\begin{center}
\begin{tikzpicture}[scale=.77]
\pgfplotstableread{test		vars			evals              name
1	        51.269123783		69.5024924895      401.bzip2
2		52.3623445826		75.1937984496      429.mcf
3		50.6051368175		59.7864590675      456.hmmer
4		50.1007387508		58.4844559585      458.sjeng
5		50.0000000000		67.5421403371      ent
6		50.8718664096		73.130971067       figlet-2.2.5
7		49.9690519708		59.4152879835      maradns-1.4.06
8		50.6558914265		62.2711225929      wget-1.12
9               48.8167475728	        62.3060344828      coreutils-8.13-cksum
10              40.6539637732	        47.204669003       coreutils-8.13-cp
11              50.7379627389	        67.7289554161      coreutils-8.13-cut
12              50.7870675841	        63.7449802529      coreutils-8.13-dd
13              50.7023885787	        66.5717192269      coreutils-8.13-df
14              51.4159598407	        71.9275323553      coreutils-8.13-du
15              46.3858892352	        53.8288954927      coreutils-8.13-mv
16              50.7917656374	        68.2744855772      coreutils-8.13-nohup
17              52.9022797322	        85.2016708588      coreutils-8.13-pr
18              51.2813922749	        68.1925493662      coreutils-8.13-ptx
19              50.5519059749	        63.0456402332      coreutils-8.13-tail
}\tableA
\begin{axis}[
  height=7cm,
  width=15cm,
  legend style={at={(0.5,0.89)},anchor=south,legend columns=-1},
  legend style={draw=none},
  ybar=0pt,
  bar width=7pt,
  xticklabels from table={\tableA}{name},
  xtick=data,
  x tick label style={rotate=45,anchor=east},
  xtick pos=left,
  yticklabel={\pgfmathprintnumber\tick\%},
]
\addplot table[x=test,y=vars] {\tableA};
\addplot table[x=test,y=evals] {\tableA};
\legend{\#Vars \sfrac{\TSL}{\TWO},\#Evals \sfrac{\TSL}{\TWO}}
\end{axis}
\end{tikzpicture}
\end{center}
The blue bars (resp.\ red bars) depict the percentage of required variables
(resp.\ evaluations of right-hand sides) of the solver \TSL{} compared to the solver \TWO{}.
\caption{\label{f:bench1}Efficiency of \TSL{} vs.\ \TWO{}.}
\end{figure}

Figure~\ref{f:bench1} compares the solvers \TSL{} and
\TWO{} in terms of space and time. For a reasonable metric for space we choose the
total number of variables (i.e., occurring pairs of program points and contexts) and for time
the total number of evaluations of right-hand sides of a corresponding variable.
The table indicates that the solver \TSL{} requires only around half of the variables of the solver \TWO{}.
Interestingly, the percentage of evaluations of right-hand sides in a run of \TSL{} is still around 60 to 70
percent of the solver \TWO{}.

\begin{figure}[tb]
\begin{center}
\begin{tikzpicture}[scale=.77]
\pgfplotstableread{test	eq	slr	tp	uk        name
1	92.4	5.2	2.3	0.1       401.bzip2
2	86.4	9	4.6	0         429.mcf
3	97.4	1.5	1.1	0         456.hmmer
4	94.1	0	5.9	0         458.sjeng
5	81.3	0	18.7	0         ent
6	79.6	18.8	1.6	0         figlet-2.2.5
7	87.6	1.8	10.5	0.1       maradns-1.4.06
8       94	2.3	3.5	0.2       wget-1.12
9       99.1	0.0	0.9	0.0       coreutils-8.13-cksum
10      90.6	4.4	5.0	0.0       coreutils-8.13-cp
11      94.7	4.5	0.8	0.0       coreutils-8.13-cut
12      96.1	3.0	0.9	0.0       coreutils-8.13-dd
13      93.3	4.3	2.4	0.0       coreutils-8.13-df
14      94.3	4.4	1.3	0.0       coreutils-8.13-du
15      96.0	2.5	1.5	0.0       coreutils-8.13-mv
16      94.3	5.4	0.3	0.0       coreutils-8.13-nohup
17      81.3	13.8	1.9	2.9       coreutils-8.13-pr
18      93.3	6.1	0.6	0.0       coreutils-8.13-ptx
19      91.5	3.4	5.1	0.0       coreutils-8.13-tail
}\tableB
\begin{axis}[
  enlarge y limits=0.15,
  height=7cm,
  width=15cm,
  legend style={at={(0.5,0.89)},anchor=south,legend columns=-1},
  legend style={draw=none},
  ybar stacked,
  xticklabels from table={\tableB}{name},
  xtick=data,
  x tick label style={rotate=45,anchor=east},
  xtick pos=left,
  xtick align=outside,
  yticklabel={\pgfmathprintnumber\tick\%},
]
\addplot table[x=test,y=eq] {\tableB};
\addplot table[x=test,y=slr] {\tableB};
\addplot table[x=test,y=tp] {\tableB};
\addplot table[x=test,y=uk] {\tableB};
\legend{(a),(b),(c),(d)}
\end{axis}
\end{tikzpicture}
\end{center}
Percentage of variables occurring during a a run of both solvers
\begin{enumerate}[label=(\alph*)]
  \item for which \TSL{} and \TWO{} compute the same value;
  \item for which \TSL{} computes more precise results then \TWO{};
  \item for which \TWO{} computes more precise results then \TSL{};
  \item for which the results computed by \TSL{} and \TWO{} are incomparable.
\end{enumerate}
\caption{\label{f:bench2}Precision of \TSL{} vs. \TWO{}.}
\end{figure}
In the second experiment, as depicted by Figure~\ref{f:bench2}, we compare the precision of the two solvers.
For a reasonable metric we only compare the values of variables which occur in both solver runs.
As a result, between 80 and 95 percent of the variables receive the same value.
It is remarkable that the mixed phase solver is not necessarily more precise.
In many cases, an increase in precision is observed, yes --- there are, however, also input programs
where the two-phase solver excels.
 
\IfLong{  \section{Proof of Theorem~\ref{t:ssrr}}\label{s:ssrr-term:proof}

By induction on $i$, we prove that ${\sf solve}(b,i)$ terminates.
For $i=0$, the statement is obviously true. Now assume that $i>0$ and that
by induction hypothesis, ${\sf solve}(b',i-1)$ terminates for $b'\in\{{\bf true},{\bf false}\}$.
First consider the case where $b={\bf true}$. In this case, the flag $b'$ for the tail-recursive call will be 
equal to ${\bf true}$ as well,
and only narrowing will be applied to $y_i$. Therefore, the sequence of tail-recursive calls
${\sf solve}(b,i)$ eventually will terminate.
Now consider the case where $b={\bf false}$. By induction hypothesis, all recursive calls ${\sf solve}(b',i-1)$
terminate. Consider the sequence of tail recursive calls where the flag $b'$ is not set to ${\bf true}$.
Within this sequence, the new values for $y_i$ form an ascending chain $d_0\sqsubset d_1 \sqsubset d_2\ldots$
where $d_{j+1} = d_j\widen a_j$ for suitable values $a_j$. Due to the properties of a widening operator,
this sequence is finite, i.e., there is some $j$ such that $d_{j+1}\sqsubseteq d_j$.
In this case the call either terminates directly or
recursively calls ${\sf solve}(b',i)$ for $b'={\bf true}$. Therefore, solving terminates also in this case.

It remains to prove that upon termination, a sound variable assignment $\sigma$ is found.
For $j=1,\dotsc,n$, and a variable assignment $\rho:\{y_1,\dotsc,y_n\}\to\D$, we consider
the system ${\cal E}_{\rho,j}$ defined by:
\[
y_i = f^\sharp_{\rho,i}\qquad(i=1,\dotsc,j)
\]
with $f^\sharp_{\rho,i}\,\sigma^\sharp = f_i(\rho\oplus\sigma^\sharp)$ for $\sigma^\sharp:\{y_1,\dotsc,y_j\}\to\D$.
Here, the operator $\oplus$ is meant to overwrite the values of $\rho$ by the corresponding values
of $\sigma^\sharp$ whenever $\sigma^\sharp$ is defined. Let $\underline{\cal E}_{\rho,j}$ denote the lower monotonization
of ${\cal E}_{\rho,j}$. We claim:
\begin{enumerate}
\item	Assume $\sigma_1$ is a post-solution of the system $\underline {\cal E}_{\rho,j}$.
	Then ${\sf solve}({\bf true},j)$ when started with $\rho_1=\rho\oplus\sigma_1$,
	returns with a variable assignment $\rho_2=\rho\oplus\sigma_2$ where
	$\sigma_2$ is still a post-solution of $\underline{\cal E}_{\rho,j}$.
\item	${\sf solve}({\bf false},j)$ returns a post-solution of $\underline{\cal E}_{\rho,j}$.
\end{enumerate}
We proceed by induction on $j$. For $j=0$, nothing must be proven.
Therefore assume $j>0$. Consider the first claim. As post-solutions of $\underline{\cal E}_{\rho,j}$
are preserved by each update which combines an old value $\sigma(y_i)$ with the value of the corresponding
right-hand side $f^\sharp_{\rho,i}$ for $\sigma$ by means of $\sqcap$ and thus also by $\narrow$, the claim follows.

For a proof of the second claim,
let us consider the sub-sequence of tail-recursive calls ${\sf solve}(b,j)$
where $b'$ remains ${\bf false}$.
Eventually this sequence ends with a last call where $b'$ is set to ${\bf true}$.
Let $\rho'$ denote the variable assignment before this update occurs.
Then $\rho'(y_j)\sqsupseteq f_j\,\rho'\sqsupseteq\underline f_j\,\rho'$.
Likewise, by induction hypothesis, $\rho'|_{\{y_{1},\ldots,y_{j-1}\}}$ is a post-solution
of $\underline{\cal E}_{\rho',j-1}$.
Altogether therefore, $\rho' = \rho\oplus\sigma'$ for some variable assignment $\sigma':\{y_1,\ldots,y_j\}\to\D$
which is a post-solution of $\underline{\cal E}_{\rho',j}$.
Accordingly, ${\sf solve}({\bf false},j)$ either directly terminates with $\rho'$, and the second claim follows,
or ${\sf solve}({\bf true},j)$ is called, and the second claim follows from the first one.
This completes the proof of the two claims.
Since the second claim, instantiated with $j=n$, implies that the variable assignment returned by the
algorithm is a post-solution of the lower monotonization of the system, it is sound. And by Lemma~\ref{l:lower0}.3,
it then is also a post-solution
of the original abstract system whenever all right-hand sides are monotonic.
\qed
   \section{Proof of Theorem~\ref{t:term-two}}\label{s:two-term:proof}

Assume that only finitely many variables are encountered during the run of the algorithm, i.e.,
from some point neither ${\sf dom}_0$ nor ${\sf dom}_1$ receive new elements. 
Since ${\sf solve}_0(y)$ is called before the variable $y$ is added to ${\sf dom}_1$,
and ${\sf solve}_0(y)$ enforces that $y$ is included in ${\sf dom}_0$, we have that
${\sf dom}_1\subseteq{\sf dom}_0$ throughout the algorithm. 
Due to the initial call ${\sf solve}_1(y_0,0)$, $y_0$ is contained in $Y_1$ implying the first item in the list.

Variables $y$ are added into sets ${\sf infl}[z]$ only during the evaluation of a call to ${\sf eval}_i$ and after
an appropriate call to ${\sf solve}_i$ --- implying that $y$ is contained in ${\sf dom}_i$ whenever ${\sf eval}_i$ 
was called inside a call ${\sf do\_var}_i(y)$.
Accordingly, all variables added to the priority queue necessarily are contained in ${\sf dom}_0$.
Thus, all variables for which ${\sf do\_var}_0$ is called at a call of ${\sf iterate}_0(n)$
are all contained in ${\sf dom}_0$, 
while all variables for which ${\sf do\_var}_1$ is called at a call of ${\sf iterate}_1(n)$ are already 
contained in ${\sf dom}_1$.
Therefore, we define $Y_i = {\sf dom}_i$ when the iteration has terminated for $i=0,1$.
We claim that for every priority $n$, the following holds:
\begin{enumerate}
\item	Every call ${\sf iterate}_0(n)$
during the evaluation of ${\sf solve}_0(0,y_0)$ terminates.
\item	Every call ${\sf iterate}_1(n)$
during the evaluation of ${\sf solve}_1(0,y_0)$
terminates as well.
\end{enumerate}

In order to prove the first claim, assume for a contradiction that there is some $n$ such that
the call ${\sf iterate}_0(n)$ does not terminate. 
Since $Y_0$ is finite, there must by a variable $y$
of maximal priority ${\sf prio}(y) \leq n$ so that ${\sf do\_var}_0(y)$ is evaluated infinitely often.
This means that from some point on, $y$ is the variable of maximal priority for which ${\sf do\_var}_0$ is called.
Let $d_i,i\geq 0$ denote the sequence of the new values for $y$. We claim that for every $i\geq 0$,
$d_{i+1} = d_i\widen a_i$ holds for some suitable value $a_i$. This holds if $y\in{\sf point}$ from the first
evaluation onward. Clearly, if this were the case, we arrive at a contradiction, as any such widening sequence
is ultimately stable.
Accordingly, it remains to prove that from the first evaluation onward, $y$ is contained in ${\sf point}$ ---
whenever ${\sf do\_var}_0(y)$ is called.
Assume for a contradiction that there is a first such call where $y$ is not contained in ${\sf point}$.
Assume that this call provided the $i$th value $d_i$ for $y$.
This means that, since the last evaluation of $f^\sharp_y$, no query to the value of $y$ during
the evaluation of lower priority variables has occurred. Accordingly, the set ${\sf infl}[y]$ does not contain
any lower priority variables, which means that no further variable is evaluated before the next call
${\sf do\_var}_0(y)$. But then this next evaluation of $f^\sharp_y$ will
return the value $a$. Subsequently, the queue $Q$ does no longer contain variables of priority less then or equal to $n$,
and therefore the iteration would terminate --- in contradiction to our assumption.

Now consider the second claim. For a contradiction now assume that there is some $n$ so that
the call  ${\sf iterate}_1(n)$ does not terminate. Since every call ${\sf iterate}_0(m)$
encountered during its evaluation is already known to terminate, we conclude that there must be a variable $y$
of priority less then or equal to $n$ so that ${\sf do\_var}_1(y)$ is evaluated infinitely often.
As before this means that from some point on, $y$ is the variable of maximal priority for which
${\sf do\_var}_1(y)$ is called.
Let $d_i,i\geq 0$ denote the sequence of the new values for $y$. We claim that for every $i\geq 0$,
$d_{i+1} = d_i\narrow a_i$ holds for some suitable value $a_i$. This holds if $d_i\sqsubseteq d_{i+1}$ and
$y\in{\sf point}$ from $i=1$ onward.
Again, if this were the case, we arrive at a contradiction, as any such widening sequence
is ultimately stable.
Accordingly, it remains to prove that from the first evaluation onward, $y$ is contained in ${\sf point}$ ---
whenever ${\sf do\_var}_1(y)$ is called. This, however, follows by the same argument as for
${\sf iterate}_0(y)$.
This completes the proof of the claim.

By the claim which we have just proven, each occurring call ${\sf iterate}_i(n)$ will terminate.
From that, the termination of the call ${\sf solve}_1(y_0,0)$ follows as stated by the theorem.
 
It remains to prove the remaining two assertions of the enumeration.
Again, we assume that only finitely many variables are encountered in a run of the local two-phase solver when started 
for a variable $y_0$, and assume that after some call to ${\sf do\_var}_i$, 
no further variable is added to ${\sf dom}_0$,
and likewise no further variable is added to ${\sf dom}_1$.
In order to prove the second assertion,
we prove that the following invariants hold before every call ${\sf do\_var}_i(y_1)$:
\begin{enumerate}
\item	For every variable $y$ in the current domain ${\sf dom}_0$, ${\sf infl}[y]$ contains (at least) all variables
	$z\not\in Q\cup\{y_1\}$ whose last evaluation of $f^\sharp_z$ has called ${\sf eval}_i(y)$;
\item	If $y\in{\sf dom}_0\backslash(Q\cup\{y_1\})$, then 
	$\sigma_0[y]\sqsupseteq f_y^\sharp(\underline\top\oplus\sigma_0)$;
\item	If $y\in{\sf dom}_1\backslash(Q\cup\{y_1\})$, then 
	$\sigma_1[y]\sqsupseteq\underline f_y^\sharp(\underline\top\oplus\sigma_1)$;
\end{enumerate}
Here, $\underline\top$ is the variable assignment which maps each variable in $Y_0$ to $\top$.
Here, we only prove the second invariant.
For that, consider a call to ${\sf do\_var}_1(y_1)$. 
If this is the very first call of ${\sf do\_var}_1$ for $y_1$,
then this occurs inside a call ${\sf solve}_1(y_1)$. Accordingly, the value 
$\sigma_1[y]$ has been initialized to $\sigma_0[y]$. Then we have, by the first invariant:
\[
\sigma_1[y_1] = \sigma_0[y_1]\sqsupseteq f^\sharp_{y_1}\,(\underline\top\oplus\sigma_0)\sqsupseteq
	\underline f^\sharp_{y_1}\,(\underline\top\oplus\sigma_0)
\]
At that moment, the priority queue
does not contain any variable $y$ with priority less or equal the priority of $y_1$,
implying that for all these $y$, $\sigma_1[y]\sqsupseteq\underline f_y^\sharp(\underline\top\oplus\sigma_1)$ holds.
This property is preserved by updating $\sigma_1[y_1]$ with a value exceeding 
$\underline f_{y_1}^\sharp(\underline\top\oplus\sigma_1)$.
Accordingly, all variables $z$ from ${\sf infl}[y_1]$ with priority less or equal to $y_1$ will subsequently
be iterated upon with ${\sf iterate}_1$. But since these variables $z$
satisfy $\sigma_1[z]\sqsupseteq\underline f_{z}^\sharp(\underline\bot\oplus\sigma_1)$, the invariant holds for the
calls of ${\sf do\_var}_1$ therein.

By construction, 0 is the maximal priority of any variable.
$y_0$ receives the least priority. Therefore, ${\sf solve}_1(0,y_0)$ returns with an empty queue $Q$.
By the second invariant the second assertion of the theorem follows.
\qed
   \section{Proof of Theorem~\ref{t:sslr-term}}\label{s:slr-term:proof}

By Theorem~\ref{t:local} the only condition  for \TSL\ to terminate is that only finitely many variables are
encountered. No further assumptions, e.g., w.r.t.\ monotonicity of right-hand sides must be made as in
\cite{Apinis13,Amato16}.
Upon termination, the algorithm is guaranteed to return sound results. The returned variable assignment is a (partial)
post-solution of the lower monotonization of the system, which means it may not necessarily be a post-solution of the
original system --- given that some right-hand sides are not monotonic.

Assume that only variables from the finite set $Y_0$ are encountered during the run of the algorithm.
We claim that for every priority $i$, the following holds:
\begin{enumerate}
\item	Every call ${\sf iterate}({\bf true}, i)$
during the evaluation of ${\sf solve}\,y_0$ terminates.
\item	Every call ${\sf iterate}({\bf false}, i)$
during the evaluation of ${\sf solve}\,y_0$
terminates as well.
\end{enumerate}

In order to prove the first claim, assume for a contradiction that there is some $i$ such that
the call ${\sf iterate}({\bf true}, i)$ does not terminate. Note that then any subsequent call to ${\sf do\_var}$ as well
as ${\sf iterate}$ will always be evaluated for the Boolean value ${\bf true}$.
Since $Y_0$ is finite, there is a variable $y$
of maximal priority ${\sf prio}(y) \leq i$ so that ${\sf do\_var}({\bf true},y)$ is evaluated infinitely often.
This means that from some point on, $y$ is the variable of maximal priority for which ${\sf do\_var}$ is called.
Let $d_i,i\geq 0$ denote the sequence of the new values for $y$. We claim that for every $i\geq 0$,
$d_{i+1} = d_i\narrow a_i$ holds for some suitable value $a_i$. This holds if $y\in{\sf point}$ from the first
evaluation onward. Clearly, if this were the case, we arrive at a contradiction, as any such narrowing sequence
is ultimately stable.
Accordingly, it remains to prove that from the first evaluation onward, $y$ is contained in ${\sf point}$ ---
whenever ${\sf do\_var}({\bf true},y)$ is called.
Assume for a contradiction that there is a first such call where $y$ is not contained in ${\sf point}$.
Assume that this call provided the $i$th value $d_i$ for $y$.
This means in particular that, since the last evaluation of $f^\sharp_y$, no query to the value of $y$ during
the evaluation of lower priority variables has occurred. Accordingly, the set ${\sf infl}[y]$ does not contain
any lower priority variables, which means that no further variable is evaluated before the next call
${\sf do\_var}({\bf true},y)$. But then this next evaluation of $f^\sharp_y$ will
return the value $a$. Subsequently, the queue $Q$ does no longer contain variables of priority $\leq i$,
and therefore the iteration would terminate --- in contradiction to our assumption.

Let us therefore now consider the second claim. For a contradiction now assume that there is some $i$ so that
the call  ${\sf iterate}({\bf false}, i)$ does not terminate. Since every call ${\sf iterate}({\bf true},j)$
encountered during its evaluation is already known to terminate, we conclude that there must be a variable $y$
of maximal priority $\leq i$ so that ${\sf do\_var}({\bf false},y)$ is evaluated infinitely often.
As before this means that from some point on, $y$ is the variable of maximal priority for which
${\sf do\_var}({\bf false},y)$ is called.
Let $d_i,i\geq 0$ denote the sequence of the new values for $y$. We claim that for every $i\geq 0$,
$d_{i+1} = d_i\widen a_i$ holds for some suitable value $a_i$ where
$y\in{\sf point}$ from $i=1$ onward.
Again, if this were the case, we arrive at a contradiction, as any such widening sequence
is ultimately stable.
Accordingly, it remains to prove that from the first evaluation onward, $y$ is contained in ${\sf point}$ ---
whenever ${\sf do\_var}({\bf false},y)$ is called. This, however, follows by the same argument as for
${\sf iterate}({\bf true},y)$.
This completes the proof of the claim.

Now assume that only finitely many variables are encountered in a run of \TSL\ when started for a variable $x$,
and assume that after some call to ${\sf do\_var}$, no further variable is encountered.
Let $Y_0$ denote this set of variables.
By the claim which we have just proven, each subsequent call to the function ${\sf iterate}$ will terminate.
From that, the termination of the call ${\sf solve}\,y_0$ follows as stated by the theorem.
 
It remains to prove the second assertion.
We remark that,	whenever a new variable is encountered, it is added into the set ${\sf dom}$ and never removed.
Let $Y_0$ again denote the finite set of variables encountered during
${\sf solve} \,y_0$, i.e., the final value of ${\sf dom}$.
In particular, $y_0$ is contained in $Y_0$. In order to prove the second assertion,
we note that the following invariants hold before every call ${\sf do\_var}(b,y_1)$:
\begin{enumerate}
\item	For every variable $y$ in the current domain ${\sf dom}$, ${\sf infl}[y]$ contains (at least) all variables
	$z\not\in Q\cup\{y_1\}$ whose last evaluation of $f^\sharp_z$ has called ${\sf eval}\,y$;
\item	If $y\in{\sf dom}\backslash(Q\cup\{y_1\})$, then $\sigma[y]\sqsupseteq\underline f_y^\sharp(\underline\top\oplus\sigma)$;
\item	If $b={\bf true}$, then $\sigma[y_1]\sqsupseteq\underline f_{y_1}^\sharp(\underline\top\oplus\sigma)$.
\end{enumerate}
Here, $\underline\top$ is the variable assignment which maps each variable in $Y_0$ to $\top$.
In order to see the second statement, we observe that iteration on a variable always starts with
the flag ${\bf false}$. Now consider a call to ${\sf do\_var}({\bf false},y_1)$ where $b'$ is set
to ${\bf true}$. This is the case when $\sigma[y_1]\sqsupseteq{\it tmp}$  where ${\it tmp}$ is the
value of the last evaluation of the right-hand side of $y_1$.
Accordingly,
\[
\sigma[y_1]\sqsupseteq f_{y_1}^\sharp\,(\underline\top\oplus\sigma)\sqsupseteq
	\underline f_{y_1}^\sharp\,(\underline\top\oplus\sigma)
\]
At that moment, the priority queue
does not contain any variable $y$ with priority less or equal the priority of $y_1$,
implying that for $y$, $\sigma[y]\sqsupseteq\underline f_y^\sharp(\underline\top\oplus\sigma)$
holds.
This property is preserved by updating $\sigma[y_1]$ with a value exceeding $\underline f_y^\sharp(\underline\top\oplus\sigma)$.
Accordingly, all variables $z$ from ${\sf infl}[y_1]$ with priority less or equal to $y_1$ will subsequently
be iterated upon with $b={\bf true}$. But since these
satisfy $\sigma[z]\sqsupseteq\underline f_{z}^\sharp(\underline\top\oplus\sigma)$, the invariant holds for the
calls of ${\sf do\_var}$ therein.

By construction,
$y_0$ receives the least priority. Therefore, ${\sf solve}\,y_0$ returns with an empty queue $Q$.
By the second invariant the second assertion of the theorem follows.
\qed

 }
\end{document}